\author{Kurt Pagani}
\title{Transference Plans and Uncertainty}
\date{{\small October 2015}}
\newcommand{\norm}[1]{\left\lVert#1\right\rVert}
\newcommand{\RR}[1]{\mathbb{R}^#1}
\newcommand{\CC}[1]{\mathbb{C}^#1}
\newcommand{\LL}[1]{L^2(\RR{#1})}
\newcommand{\Prob}[1]{\mathcal{P}({\RR #1})}
\newcommand{\Probc}[1]{\mathcal{P}({\RR #1 \times \RR #1})}
\newcommand{\HH}{\mathcal{H}}
\newcommand{\TT}{\mathcal{T}}
\newcommand{\VV}{\mathcal{V}}
\newtheorem{theorem}{Theorem}
\newtheorem{lemma}{Lemma}
\newtheorem{prop}{Proposition}
\newtheorem{definition}{Definition}
\newtheorem{corollary}{Corollary}
\newtheorem{remark}{Remark}
\newcommand{\assign}{:=}
\newcommand{\tmem}[1]{{\em #1\/}}
\newcommand{\tmop}[1]{\ensuremath{\operatorname{#1}}}
\newcommand{\tmverbatim}[1]{{\ttfamily{#1}}}
\begin{document}
\maketitle
\begin{abstract}
We will discuss methods of {\sl Optimal Transportation Theory} and its relations
to problems in quantum mechanics. This essentially means that the cost function
is some Hamiltonian $\HH(q,p)$ on phase space (symplectic manifold), and the marginal 
measures that have to be transported are linked by a (implicit) transformation group.   
\end{abstract}
\tableofcontents
\vspace{0.5cm}
This article is a (mostly non-technical) overview, showing the main
connections and some unsettled questions. To keep the size within reasonable bounds,
we can only scratch the surface, that is we will consider the {\sl flat} case and the 
unitary Fourier transformation as generator while omitting time dependence at all.
Since we only have to deal with locally compact spaces (actually, more often than not
${\RR d}$), {\sl measures} will be understood by the functional approach.    

Transportation theory (see e.g. \cite{Ambrosio2013},\cite{villani_topics_2003}) is about
- among other things - {\sl transference plans}, i.e. probability measures $\gamma$ on 
a product space $X\times Y$, such that the {\sl marginal measures}
\footnote{That is $\pi_{X,\#}\mu$ and $\pi_{Y,\#}\mu$, the push-forwards by the projection
maps} 
on $X$ and $Y$ respectively coincide with two fixed measures $\mu$ and $\nu$, which in 
turn are the objects that shall be transported by a so called transport map 
$T:X\longrightarrow Y$
by {\sl push-forward}: $T_\#\mu=\nu$. Typical questions: 
{\sl is there a mapping} $p=T(q)$ {\sl such that}
\begin{equation*}
     \int_{X\times Y} \HH(q,p)\,d\gamma(q,p) =  \int_{X\times Y} \HH(q,T(q))\,d\gamma(q,p)=
      \int_{X} \HH(q,T(q))\,d\mu(q),
\end{equation*} 
i.e. $d\gamma(q,p)=d\mu(q)\,\delta_{T(q)}(p)$, or: {\sl what can we say about}
\begin{equation*}
    \sup_{\gamma\in\Gamma(\mu,\nu)} \int_{X\times Y} \chi_{\Lambda}(q,p)\,d\gamma(q,p),
\end{equation*}
for a compact subset $\Lambda$? There is a wealth of new ideas and methods which will 
remain unmentioned here, but may be easily adapted from the 
excellent book \cite{villani2008optimal}.
\newpage
\section{Overview and Notation}
In the sequel we are going to use some results and terminology of
{\sl mass transportation theory}, where \cite{villani_topics_2003} serves
as the main reference.

Let $\Prob{n}$ denote the space of probability measures on $\RR n$ and for
$\varphi\in\LL{n}$ let $\hat\varphi$ denote its (unitary) Fourier transform.
Each normalized $\varphi\in\LL{n}$ gives rise to a measure
\begin{displaymath}
   \nu_{\varphi}(f)=\int_{\RR n} f(x) |\varphi(x)|^2 \,dx,
\end{displaymath}
where $f\in C_{0}(\RR n)$, the continuous functions with compact support. Then
we define the mapping
\begin{equation}
   \mu : \LL{n}\cap\{||\varphi||=1\} \longrightarrow \Probc{n} 
\end{equation}
\begin{displaymath}
   \varphi \longmapsto \nu_{\varphi}\otimes\nu_{\hat\varphi},
\end{displaymath}
that means $\mu_{\varphi}$ is the (unique) product measure with marginals 
$\nu_{\varphi}$ and $\nu_{\hat\varphi}$. Furthermore we denote by $\Gamma(\varphi)$
the subset of $\Probc{n}$ whose elements have the aforementioned marginals.

Let $\HH:\RR{n}\times\RR{n}\longrightarrow\mathbb{R}$ be lower semi-continuous
and bounded below, then we call
\begin{equation}
       K_\HH(\varphi) = \inf_{\gamma\in\Gamma(\varphi)} 
          \int_{\RR{n} \times \RR{n}} \HH(q,p) \,d\gamma(q,p)
\end{equation}\label{KantEnergy}
the {\sl Kantorovich} energy of $\varphi$. Similarly we call
\begin{equation}
      E_\HH(\varphi) = \int_{\RR{n} \times \RR{n}} \HH(q,p) \, d\mu_{\varphi}(q,p)
\end{equation}\label{SchroedEnergy}
the {\sl Schr\"odinger} energy, for reasons that will be enlightened soon. 
Monge's formulation of the {\sl optimal transport problem} reads in our case:
\begin{equation}
   M_\HH(\varphi)=\inf_{T} \left\{\int_{\RR n} \HH(q,T(q) d\nu_{\varphi} : 
       T_\# \nu_{\varphi} = \nu_{\hat\varphi} \right\},
\end{equation}
which means to find a minimizing map $T:{\RR n}\rightarrow \RR{n}$, that 
transports the measure $\nu_\varphi$ to $\nu_{\hat\varphi}$ by pushing forward
\footnote{$T_\#\nu(f)=\nu(T^\#(f)$, where $T^\#(f)=f\circ T$ denotes {\sl pull-back}}:
\begin{displaymath}
     T_\# \nu_{\varphi}(f):=\nu_{\varphi}(f\circ T)=
        \int_{\RR n} f(T(q)) |\varphi(q)|^2 \,dq 
          = \int_{\RR n} f(p) |\hat\varphi(p)|^2 \,dp.
\end{displaymath}
If all quantities involved were smooth enough and $T$ one to one, then we would get 
the condition
\begin{equation}\label{MAEQ}
     |\varphi(q)|^2 = |\hat\varphi(T(q))|^2\, |\det{DT(q)}|
\end{equation}
by a simple change of coordinates.
\subsection{Schrödinger Energy}
Suppose $\HH$ has the familiar form $\HH(x,k)= \frac{\hbar^2}{2m}|k|^2+\VV(x)$,
then we easily calculate that 
\begin{displaymath}
     K_\HH(\varphi) = E_\HH(\varphi)=
        \frac{\hbar^2}{2m} \int_{\RR n} |k|^2 |\hat\varphi(k)|^2  dk +   
        \int_{\RR n} \VV(x) |\varphi(x)|^2  dx
\end{displaymath}
holds. Furthermore, if $\partial_j\varphi\in L^2(\RR n)$,  
then the above expression reduces to 
\begin{equation}
          K_\HH(\varphi) = E_\HH(\varphi)= \int_{\RR n}
              \left(  \frac{\hbar^2}{2m} |\nabla\varphi(x)|^2
              + \VV(x) |\varphi(x)|^2 \right) dx.
\end{equation}

Whether the energy is finite or not will also depend on the behaviour of $\VV$, of
course. 
In a similar way the above deduction holds whenever the {\sl cost function} $\HH$
has the form $\HH(q,p)=\mathcal{T}(p)+\mathcal{V}(q)$, that is the 
{\sl Kantorovich energy} coincides with $E_\HH$ which in turn means that the 
{\sl transference plan} $\mu_\varphi=\nu_{\varphi}\otimes\nu_{\hat\varphi}$ 
is optimal. Villani notes with reference to the sand pile example \cite{villani_topics_2003} 
\begin{verse}
    \ldots this corresponds to the most stupid transportation plan that one may 
    imagine: any piece of sand, regardless of its location, is distributed
    over the entire hole, proportionally to the depth.
\end{verse} 
He certainly would not claim that quantum mechanics were stupid, however, we 
recognize that the procedure mentioned is just another formulation of the
uncertainty principle (replacing sand pile/hole by position/momentum, although this
analogy should not be taken too serious). This is in
strong contrast to the corresponding {\sl Monge} problem 
(omitting the factor $\hbar^2/2m$ from now on),
\begin{displaymath}
       M_\HH(\varphi)=\inf_{T} \left\{\int_{\RR n}
           \left( |T(x)|^2+\VV(x) \right) d\nu_{\varphi} : 
       T_\# \nu_{\varphi} = \nu_{\hat\varphi}\right\},
\end{displaymath}
where, since $T$ is a map, there is no such distribution ({\sl mass} 
cannot be split by Monge transport). Although we speak here of {\sl virtual}
transport, the analogies are sometimes useful, in that $d\gamma(x,k)$
measures the amount of {\sl mass} transferred from $x$ to $k$. Therefore,
a general $\gamma$ may {\sl smear out} $x$ (à la multi-valued mappings), whereas 
a transference plan of the form $(id\times T)_\#\mu$ cannot. On the other hand,
assume $\gamma = \mu\otimes \delta_\kappa$ (an extreme case which is of no concern
in this paper), yields $\int \HH(x,k) d\gamma(x,k) = \int \HH(x,\kappa) d\mu(x)$,
this means, everything will be transported to $\kappa$. Such pathologies are
excluded for $\gamma\in\Gamma(\varphi)$, of course.

By a theorem of Brenier-McCann (\cite{mccann_existence_1995}, see Appendix Thm. \ref{mcann_thm}),
there is a convex function $\phi$ on $\RR n$ such that
\begin{displaymath}
                  (\nabla\phi)_{\#}\nu_\varphi = \nu_{\hat\varphi},
\end{displaymath}
whence we have \footnote{notice that $K_\HH$ is a relaxation of $M_\HH$ since 
an admissible transport map $T$ always gives rise to a transference plan
$(id\times T)_{\#} \nu_{\varphi}\in\Gamma(\varphi)$} 
\begin{displaymath}
       K_\HH(\varphi)\leq M_\HH(\varphi) \leq \int_{\RR n}
           \left( |\nabla\phi(x)|^2+\VV(x) \right) |\varphi(x)|^2 dx = E_\HH(\varphi).
\end{displaymath}
Therefore, all three quantities coincide in case of $\HH(x,k)=\TT(k)+\VV(x)$. This
is no surprise because $\int (\TT+\VV) d\gamma$ is constant on $\Gamma(\varphi)$.
Now, if we supposed for the moment the existence of a ground state $\varphi_0>0$
to $E_\HH$ (more precisely to the self-adjoint operator corresponding to $\HH$),
we would find the identities
\begin{displaymath}
    E_\HH(\varphi_0)=K_\HH(\varphi_0)= 
       \int_{\RR n}
           \left( |\nabla\log{\varphi_0(x)}|^2+\VV(x) \right) |\varphi_0(x)|^2 dx
           = M_\HH(\varphi_0). 
\end{displaymath}
This leads to the question: 
\begin{verse}
     Can $\nabla\phi = - \nabla\log{\varphi_0}$ be a {\sl Brenier} map?
\end{verse}
In the first place $\phi=- \log{\varphi_0}$ is required to be convex, or equivalently,
the ground state $\varphi_0(x)=C e^{-\phi(x)}$ should be {\tt log-}concave,
a property that is not uncommon for certain potentials $V$. A far more
stringent condition, however, is the requirement 
$(-\nabla\log{\varphi_0})_{\#}\nu_{\varphi_0} = \nu_{\hat\varphi_0}$, which,
assuming some smoothness and recalling $(\ref{MAEQ})$, reads as
\begin{equation}
           |\varphi_0(q)|^2 = |\hat\varphi_0(\nabla\phi(q))|^2\, |\det{D^2\phi(q)}|.
\end{equation}
Actually, the ground state of the harmonic oscillator  
$\varphi_{ho}(x)=C e^{-\frac{1}{2}|x|^2 }$ satisfies the above equation and 
consequently in this particular case
$T(x)=\nabla\phi(x) = -\nabla\log{\varphi_{ho}(x)}=x$ is a transport map.
Are there others? Probably not, but we do not know. 
\subsection{General $\HH$}
If the Hamilton function $\HH$ does not split up as above, then we only have
$K_\HH(\varphi)\leq E_\HH(\varphi)$ and since the infimum in (\ref{KantEnergy})
is always attained \footnote{under the conditions given at the beginning} , there is a $\gamma_{\varphi}\in\Gamma(\varphi)$ such that
$\gamma_{\varphi}(\HH) \leq \mu_{\varphi}(\HH)$. In the following let us denote
by $\Gamma_n = {\RR n}\oplus {\RR n}$ a $2n$-dimensional phase space, where
there should be no confusion among the meanings of $\Gamma$, e.g. we 
have $\Gamma(\varphi)\subset{\cal P}(\Gamma_n)$.
Whether the minimization problem
\begin{equation}\label{eigenval0}
                 \lambda_0 = \inf \left\{ \int_{\Gamma_n} \HH d\mu_\varphi: 
                       \varphi\in L^2(\RR n), ||\varphi||_2 = 1  \right\}
\end{equation}
has a solution will depend on the function $\HH$ under
consideration, and even if there is a solution, it is by no means granted that
it will be a ground state of a corresponding self-adjoint {\sl Hamiltonian}.
Existence questions will not be our concern at this point, therefore we will
take the existence of a minimizer $\varphi_0\in L^2(\RR n)$ for granted. Since
we have assumed the function $\HH$ to be bounded below (and l.s.c) it is obvious
that 
\begin{displaymath}
                  \lambda_0 \geq \inf_{\Gamma_n} \HH > - \infty
\end{displaymath} 
and moreover it holds that $\lambda_0=E_\HH(\varphi_0)\geq K_\HH(\varphi_0)$. When we define (assuming $\HH$ fixed) 
\begin{displaymath}
        F_{\varphi}(x) = \int_{\RR n} \HH(x,k)\, |\hat\varphi(k)|^2 dk
\end{displaymath} 
as well
\begin{displaymath}
        G_{\varphi}(k) = \int_{\RR n} \HH(x,k)\, |\varphi(x)|^2 dx
\end{displaymath}
and  recall that $\mu_{\varphi}=\nu_{\varphi}\otimes\nu_{\hat\varphi}$ holds by
definition, we obtain 
\begin{displaymath}
        E_\HH(\varphi)=\int_{\Gamma_n} \HH(x,k)\, d\mu_{\varphi}(x,k) =
           \int_{\RR n} F_{\varphi}(x) d\nu_{\varphi}(x) =
           \int_{\RR n} G_{\varphi}(k) d\nu_{\hat\varphi}(k). 
\end{displaymath} 
Now we may state the {\sl Euler equations} which a minimizer must satisfy.
\begin{prop}
Let $\varphi_0\in L^2(\RR n)$ be a critical point of $E_\HH(\varphi)$, then
it satisfies the equation (in ${\cal D}'(\RR n)$)
\begin{equation}\label{EulerEq}
     \left(2 E_0 - F_{\varphi_0}(x)\right) \varphi_0(x) = 
         \int_{\RR n} G_{\varphi_0}(k)\, \hat\varphi_0(k)\, 
            e^{i\langle k,x \rangle}\, dm_n(k),
\end{equation} 
where $E_0 = E_\HH(\varphi_0)$ and $dm_n(k):=(2\pi)^{-\frac{n}{2}}\,dk$.
\end{prop}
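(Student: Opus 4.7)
The plan is to derive~(\ref{EulerEq}) as the Euler--Lagrange equation of $E_\HH$ on the unit sphere $\{||\varphi||_2=1\}\subset\LL{n}$, with a real Lagrange multiplier $\lambda$ that will turn out to equal $2E_0$. Since $\varphi$ is complex-valued while $E_\HH$ is real, I would test the stationarity condition against arbitrary $\eta\in\mathcal{D}(\RR n)$ and also against $i\eta$; combining the two real-linear identities amounts to treating $\varphi$ and $\bar\varphi$ as independent variables, and produces the single complex equation~(\ref{EulerEq}).

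Expanding $|\varphi_0+\epsilon\eta|^2$ and $|\widehat{\varphi_0+\epsilon\eta}|^2$ to first order in $\epsilon$ and using linearity of the Fourier transform, the ``holomorphic'' part of $\frac{d}{d\epsilon}\big|_{\epsilon=0}E_\HH(\varphi_0+\epsilon\eta)$ works out to
\begin{displaymath}
  \int_{\RR n}\bar\varphi_0(x)\,\eta(x)\,F_{\varphi_0}(x)\,dx+\int_{\RR n}\overline{\hat\varphi_0(k)}\,\hat\eta(k)\,G_{\varphi_0}(k)\,dk,
\end{displaymath}
while the corresponding piece of $\frac{d}{d\epsilon}\big|_0||\varphi_0+\epsilon\eta||_2^2$ is $\int\bar\varphi_0\,\eta\,dx$. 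Because $G_{\varphi_0}$ is real, Plancherel turns the $k$-integral into $\int\overline{h(x)}\,\eta(x)\,dx$, with
\begin{displaymath}
  h(x)\assign\int_{\RR n}G_{\varphi_0}(k)\,\hat\varphi_0(k)\,e^{i\dotp{k}{x}}\,dm_n(k).
\end{displaymath}

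Since $\eta\in\mathcal{D}(\RR n)$ is arbitrary, one obtains in $\mathcal{D}'(\RR n)$ the identity $\bar\varphi_0 F_{\varphi_0}+\overline{h}=\lambda\bar\varphi_0$; taking complex conjugates (using that $F_{\varphi_0}$, $G_{\varphi_0}$, and $\lambda$ are real) yields $(\lambda-F_{\varphi_0})\varphi_0=h$, which is exactly~(\ref{EulerEq}) apart from the value of $\lambda$. To pin down $\lambda$, I would pair this equation with $\bar\varphi_0$ and integrate over $\RR n$: the left-hand side contracts to $\lambda-E_0$, while by Parseval the right-hand side becomes $\int|\hat\varphi_0(k)|^2\,G_{\varphi_0}(k)\,dk=E_0$, so $\lambda=2E_0$.

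The hard part is not the algebra but rather justifying the interchange of $d/d\epsilon$ with the $(x,k)$-integral when $\HH$ is only l.s.c.\ and potentially unbounded above. For $\eta\in\mathcal{D}(\RR n)$ this can be handled by a standard truncation argument: replace $\HH$ by the bounded cut-off $\HH\wedge M$ (using $c\assign\inf_{\Gamma_n}\HH>-\infty$ to make $\HH-c$ nonnegative), perform the variation with dominated convergence for fixed $M$, and then pass $M\to\infty$ using monotone convergence together with $E_0<\infty$ and the Schwartz decay of $\eta$ and $\hat\eta$ to control the $O(\epsilon)$ cross terms. A secondary point is local integrability of $F_{\varphi_0}$ and $G_{\varphi_0}$, which follows from Fubini once $E_0<\infty$ is known, and makes~(\ref{EulerEq}) a genuine distributional identity.
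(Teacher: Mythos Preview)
The paper does not supply a proof of this proposition at all; it is stated and then followed only by remarks on regularity (``Whether (\ref{EulerEq}) is valid almost everywhere \ldots'' and ``The inverse Fourier transform on the right hand side should be understood symbolically, unless $G_{\varphi_0}\hat{\varphi}_0\in L^1(\RR n)$''). So there is nothing to compare against, and your derivation fills an actual gap.

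Your argument is the standard one and is correct: vary $E_\HH$ on the sphere, split the first variation into the $F$-piece and the $G$-piece, use Plancherel to bring the $G$-piece back to $x$-space as $\overline{h}$, and then identify the multiplier by contracting against $\bar\varphi_0$. The bookkeeping checks out, including the factor $2E_0$ (each of the two contractions gives $E_0$). Your remarks about truncating $\HH$ and about interpreting the right-hand side symbolically are exactly in the spirit of the paper's own caveats immediately after the proposition, and indeed the paper's compact rewriting $F_\varphi\,\varphi+\breve G_\varphi\star\varphi=2\lambda\,\varphi$ is just your equation $(\lambda-F_{\varphi_0})\varphi_0=h$ rearranged.

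One small wording quibble: when you ``pair this equation with $\bar\varphi_0$,'' the left-hand side gives $\lambda\int|\varphi_0|^2-\int F_{\varphi_0}|\varphi_0|^2=\lambda-E_0$, not ``$\lambda-E_0$'' because of any cancellation---you are using $\|\varphi_0\|=1$ and $\int F_{\varphi_0}\,d\nu_{\varphi_0}=E_0$ separately. That is what you meant, but it could be stated more explicitly.
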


Whether (\ref{EulerEq}) is valid almost everywhere w.r.t. Lebesgue measure 
depends (here again) on the function $\HH$. The inverse Fourier transform on
the right hand side should be understood symbolically, unless 
$G_{\varphi_0}\hat{\varphi_0}\in L^1(\RR n)$. In a compact notation the
equation for a critical point of $E_\HH$ is 
\begin{displaymath}
  F_{\varphi}\,\varphi+ {\breve{G}_{\varphi}} \star\varphi = 2\lambda\,\varphi                     
\end{displaymath}
where the convolution is defined here as 
$(f\star g)(x)=\int_{\RR n} f(x-y)\,g(y)\,dm_n(y)$. It is easily checked that
in case of $\HH(x,k)=|k|^2+\VV(x)$, (\ref{EulerEq}) reduces to 
$(-\Delta+\VV(x)) \varphi_0=E_0 \varphi_0$. Our main interest, however, is
$\HH$ being the indicator function of an open subset of $\Gamma_n$ which is
obviously bounded, measurable and lower semi-continuous. This is, as will
be outlined further below, connected to the question:
\begin{verse}
    How big can we make 
    \begin{equation}
             \int_{\Lambda} |\varphi(x)|^2 \, |\hat\varphi(k)|^2 dx\,dk,
    \end{equation}
    given a compact subset $\Lambda$ of phase space $\Gamma_n$?
\end{verse} 
Actually, the question may be posed for $\Lambda\subset\Gamma_n$ having finite
Lebesgue measure.
\subsection{Duality}
One of the corner stones of mass transportation theory certainly is 
{\sl Kantorovich's} duality formula (\cite{villani_topics_2003}, Theorem 1.3)
which, translated to our needs, says
\begin{equation}\label{duality}
   K_\HH(\varphi) = \sup_{\TT(k)+\VV(x) \leq \HH(x,k)} 
              \left\{ \int_{\RR n} \TT(k) |\hat\varphi(k)|^2 \,dk +
                               \int_{\RR n} \VV(x) |\varphi(x)|^2 \,dx                         
                                      \right\},
\end{equation}
where the functions $\TT,\VV$ may either be any bounded continuous functions on $\RR n$
or by extension $(\TT,\VV)\in L^1(\nu_{\varphi})\times L^1(\nu_{\hat\varphi})$,
satisfying the inequality $\TT+\VV\leq \HH$ point-wise in the first case and almost
everywhere (with respect to the measures) in the second case. 
We cite one other result from \cite{villani_topics_2003} which will be required
later on (a precursor of Strassen's theorem, Theorem 1.27): 
Let $U$ be a non-empty open subset of $\Gamma_n$, then
\begin{equation}\label{Strassen}
     \inf_{\gamma\in\Gamma(\varphi)} \int_{U} d\gamma =
       \sup_{A\subset{\RR n}} 
          \left\{ 
            \int_{A} |\varphi(x)|^2\,dx -
            \int_{A_U} |\hat\varphi(k)|^2\, dk : A\,\; closed 
          \right\},                                                 
\end{equation}
where $A_U:=\{k\in{\RR n}: \exists x\in A , (x,k)\notin U \}$. This result
implies, setting $\HH=\chi_U$,
\begin{displaymath}
  E_{\chi_U}(\varphi)\geq K_{\chi_U}(\varphi)=
        \sup\{ \nu_{\varphi}(A)-\nu_{\hat\varphi}(A_U)
            : A\subset{\RR n},A\,\;closed \}. 
\end{displaymath}
Note that we use the notation $\nu(A)$ and $\nu(\chi_A)$ interchangeably when
there is no danger of confusion (i.e. we identify a set with its indicator 
function).
\subsection{Symplectic transformations}\label{sympl_trafos}
Let $M:\Gamma_n \rightarrow \Gamma_n$ be a symplectic transformation, represented
by a matrix of the form (we use the same symbol)
\begin{displaymath}
          M=M^{A,B,C,D}:=\begin{bmatrix}
          A & B \\ 
          C & D
          \end{bmatrix} 
\end{displaymath}
where the $n\times n$ block matrices $A,B,C,D$ satisfy the equations:
\begin{displaymath}
         \begin{array}{c}
          A^T D - C^T B = I \\
          A^T C = C^T A \\ 
          D^T B = B^T D. 
         \end{array} 
\end{displaymath}
Then we obtain for any $f\in C_0({\Gamma_n})$:
\begin{displaymath}
      M_{\#}\mu_{\varphi}(f) = \mu_{\varphi}(f\circ M) = 
         \int_{\Gamma_n} f(A x+B k, C x + D k) d\mu_{\varphi}(x,k).
\end{displaymath}
The inverse $M^{-1}$ of $M$ is easily calculated using the symplectic
condition $M^T J M = J$ to
\begin{displaymath}
           M^{-1}=J^{-1} M^T J = \begin{bmatrix}
          D^T & -B^T \\ 
          -C^T & A^T
          \end{bmatrix}
\end{displaymath}
which implies:
\begin{displaymath}
      M_{\#}\mu_{\varphi}(f) =  \int_{\Gamma_n} f(\xi,\eta)\,
      |\varphi(D^T \xi - B^T \eta)|^2 \,
      |\hat\varphi(-C^T \xi + A^T \eta)|^2 d\xi d\eta.       
\end{displaymath}
Simple examples (e.g. $n=1$ and $\phi(x)=C \exp(-\alpha |x|)$) show that 
we cannot expect the image measure $M_{\#}\mu_{\varphi}$ being an element of some
$\Gamma(\psi)$. However, two special cases immediately spring to mind:
\begin{displaymath}
          M^{A,0,0,D}_{\#}\mu_{\varphi}(f) =  \int_{\Gamma_n} f(\xi,\eta)\,
      |\varphi(D^T \xi|^2 \,
      |\hat\varphi(A^T \eta)|^2 d\xi d\eta. 
\end{displaymath}
and
\begin{displaymath}
       M^{0,B,C,0}_{\#}\mu_{\varphi}(f) =  \int_{\Gamma_n} f(\xi,\eta)\,
      |\varphi( - B^T \eta)|^2 \,
      |\hat\varphi(-C^T \xi)|^2 d\xi d\eta.
\end{displaymath}
In the first case we have $B=C=0$, so that $A^T D = I$ by the symplectic
conditions above. The second case requires $-C^T B=I$ by the same reasoning
since $A=D=0$. Hence there are two subgroups generated by matrices of the form
\begin{displaymath}
      \begin{bmatrix}
      A & 0 \\ 
      0 & A^{-T}
      \end{bmatrix}
and 
      \begin{bmatrix}
      0 & B \\ 
     -B^{-T} & 0
      \end{bmatrix}.        
\end{displaymath}
For these, the image measures are indeed of the form $d\mu_{\psi}$. If we use
the notation $\varphi_A(x)=\varphi(A x)$ we may state:
\begin{equation}
        M^{A,0,0,A^{-T}}_{\#}\mu_{\varphi} = \mu_{\varphi_{A^{-1}}}
\end{equation}
and
\begin{equation}
       M^{0,B,-B^{-T},0}_{\#}\mu_{\varphi} = \mu_{\widehat{\varphi_{B^{-1}}}}.
\end{equation}
This follows by straightforward computation. Finally we want to mention the 
special case $B=I_n$ (the identity matrix in $\RR n$), giving M=J, thus
\begin{displaymath}
        J_{\#}\mu_{\varphi}(\HH) = \mu_{\varphi}(\HH\circ J)=\mu_{\hat\varphi}(\HH),
\end{displaymath}
which is equivalent to $E_{\HH\circ J}(\varphi)=E_\HH({\hat\varphi})$. In other
words, if $\HH$ is invariant under the canonical transformation 
$x'=k, k'=-x$ and if $\varphi_0$ is a unique positive minimum of $E_\HH$, then 
$\varphi_0(x)=C \exp(-|x|^2/2)$.  
\subsection{Orthonormal Sequences in $L^2(\RR d)$}
Let $\{\varphi_j\}_{j\in J}$ be an orthonormal sequence in $L^2(\mathbb{R})$, then
a result by H. S. Shapiro, meanwhile known as {\sl Shapiro's Umbrella Theorem},
states that if given two functions $f(x)$ and $g(k)$ in $L^2(\mathbb{R})$ such that
\begin{displaymath}
         |\varphi_j(x)| \leq |f(x)|,\,\, |\hat\varphi_j(k)|\leq |g(k)|
\end{displaymath}
for all $j\in J$ and for almost all $x,k$ in $\mathbb{R}$, then $J$ must be
finite. We refer to \cite{jaming:hal-00080455} and the references therein for background information
and more details.
Recently. E. Mallinikova (\cite{malinnikova:2010},Th. 1.2) showed the following localization property of a
orthonormal sequecne $\{\varphi_j\}_{j=1}^N$: 
\begin{equation}\label{LocProp}
                N - |A| |B| \leq \frac{3}{2} \sum_{j=1}^N 
                   \left( \sqrt{\nu_{\varphi_j}({\RR d} \backslash A)} 
                      + \sqrt{\nu_{\hat\varphi_j}({\RR d} \backslash B)}                 
                   \right)
\end{equation}
where $A,B\subset {\RR d}$ are arbitrary measurable sets with finite Lebesgue
measure (i.e. $|A|,|B|<\infty$). Remembering the definition of the Radon measures
$\nu_{\varphi}$ at the beginning, $\nu_{\varphi}({\RR d} \backslash A)$ is
just 
\begin{displaymath}
          \int_{{\RR d} \backslash A} |\varphi(x)|^2\,dx.
\end{displaymath}  
The inequality (\ref{LocProp}) immediatley leads to a quantitative version of the 
{\sl Umbrella theorem} ([EM,Th. 4]) as well as to the general inequality
\begin{equation}
      \sum_{_j=1}^N \int_{\Gamma_d} \left( |x|^p+|k|^p  \right)\, 
               d\mu_{\varphi_j}(x,k) \geq C\, N^{1+\frac{p}{2 d}},
\end{equation} 
where $C$ depends only on $p>0$ and $d$. Moreover, it is also shown that the inequality is sharp up to a multiplicative constant. 
\subsection{The Nazarov-Jaming Inrequality}
Another important result we shall need is the following inequality obtained by
Nazarov for the case $d=1$ and extended by Jaming 
\cite{jaming:hal-00120268} to $d\geq 1$.

Let $A,B\subset {\RR d}$, each having finite Lebesgue measure, then there are
positive constants \footnote{Clearly, the constants may depend on the dimension 
$d$, although we do not explicitly outline this point by notation.}
$\alpha,\beta$ and $\eta(A,B)$ such that
\begin{equation}\label{JamingNazarov}
     \nu_{\varphi}({\RR d}\backslash A)+
       \nu_{\hat\varphi}({\RR d}\backslash B) \geq \alpha e^{-\beta \eta(A,B)}
\end{equation}
holds  for all $\varphi\in L^2(\RR d),\,||\varphi||=1$. The constant $\eta$
is given by
\begin{displaymath}
    \eta(A,B) = \left\{ \begin{array}{lr} |A| |B| & : d = 1
     \\ \min(|A|\,|B|,|A|^{1/d}\, w(B), w(A)\, |B|^{1/d}) & : d \ge 1 
     \end{array} \right.
\end{displaymath}
with $w(A)$ the {\sl average width} of $A$ (see \cite{jaming:hal-00120268} 
for the precise definition).

\subsection{Scaling}
For $\lambda>0$ let $\varphi_{\lambda}(x)$ denote the scaled function 
$\lambda^\frac{n}{2} \varphi(\lambda x)$, then $||\varphi_{\lambda}||=1$ whenever 
$\varphi\in L^2(\RR n)$ and $||\varphi||=1$. The Fourier transform 
$\widehat{\varphi_{\lambda}}$ of $\varphi_{\lambda}$ is easily calculated
to be equal to $\hat{\varphi}_{1/\lambda}$, therefore 
\begin{equation}
    \mu_{\varphi_{\lambda}}(f) = \int_{\Gamma_n} f(x,k) 
       \,|\varphi_{\lambda}(x)|^2 \, |\hat{\varphi}_{1/\lambda}(k)|^2 \,
       dx\,dk,
\end{equation}
for all $f\in C_0(\Gamma_n)$. The coordinate change $\xi=\lambda x, 
\eta=k/\lambda$ yields
\begin{equation}
    \mu_{\varphi_{\lambda}}(f) = \int_{\Gamma_n} f(\frac{\xi}{\lambda},
      \lambda\,\eta) 
       \,|\varphi(\xi)|^2 \, |\hat{\varphi}(\eta)|^2 \,
       d\xi\,d\eta = \int_{\Gamma_n}  f(\frac{\xi}{\lambda},
      \lambda\,\eta) d\mu_{\varphi}(\xi,\eta).
\end{equation}

\section{Maximum Probability of Compact Sets}
\begin{definition}
Let $\Lambda$ be a closed subset of $\Gamma_n$, then we define
\begin{equation}
     e(\Lambda) = \sup \left\{ \int_{\Lambda} d\mu_{\varphi} : 
                     \varphi\in L^2(\RR n), \, ||\varphi|| = 1 \right\}.
\end{equation}
\end{definition}

\begin{lemma}
Let $A,B$ be subsets of $\RR n$ having finite Lebesgue measure, that is $|A|+|B|<\infty$, then exists a $\psi$ such that 
\begin{equation}
          \nu_{\psi}(A)+\nu_{\hat\psi}(B) = 0.
\end{equation}
\end{lemma}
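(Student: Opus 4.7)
Since $\nu_\psi$ and $\nu_{\hat\psi}$ are non-negative measures, the conclusion $\nu_\psi(A)+\nu_{\hat\psi}(B)=0$ is equivalent to producing a unit vector $\psi\in L^2(\RR n)$ with $\psi=0$ a.e.\ on $A$ and $\hat\psi=0$ a.e.\ on $B$. Writing $L^2_E$ for the $L^2$-functions vanishing off a measurable set $E$, such a $\psi$ is exactly a non-zero element of
\begin{displaymath}
V:=L^2_{\RR n\setminus A}\cap \mathcal{F}^{-1}(L^2_{\RR n\setminus B})=W^\perp,
\qquad
W:=L^2_A+\mathcal{F}^{-1}(L^2_B).
\end{displaymath}
The plan is therefore to show that $W$ is a proper closed subspace of $L^2(\RR n)$; any normalised element of $W^\perp$ then witnesses the claim.

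For the closedness I first deduce $\|P_A\mathcal{F}^{-1}P_B\mathcal{F}\|_{op}<1$ (writing $P_E$ for multiplication by $\chi_E$) from the Nazarov--Jaming inequality (\ref{JamingNazarov}): if that norm equalled $1$ one could find unit vectors $\varphi_n$ simultaneously approximating $L^2_A$ and $\mathcal{F}^{-1}(L^2_B)$ in norm, which would force $\nu_{\varphi_n}(\RR n\setminus A)+\nu_{\hat\varphi_n}(\RR n\setminus B)\to 0$ and contradict (\ref{JamingNazarov}). Since this operator norm is the cosine of the Friedrichs angle between the subspaces $L^2_A$ and $\mathcal{F}^{-1}(L^2_B)$, a standard Hilbert-space lemma yields closedness of $W$, and the open mapping theorem then supplies a stable decomposition: every $\psi\in W$ is uniquely $\psi=f+g$ with $f\in L^2_A$, $g\in\mathcal{F}^{-1}(L^2_B)$, and $\|f\|_2+\|g\|_2\le C(A,B)\|\psi\|_2$.

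To rule out $W=L^2(\RR n)$ I use a regularity obstruction. For $g\in\mathcal{F}^{-1}(L^2_B)$ Cauchy--Schwarz and $|B|<\infty$ give $\hat g\in L^1(\RR n)$ with $\|\hat g\|_1\le|B|^{1/2}\|g\|_2$, hence $g\in C_0(\RR n)$ with $\|g\|_\infty\le(2\pi)^{-n/2}|B|^{1/2}\|g\|_2$. Since $f\in L^2_A$ vanishes on $\RR n\setminus A$, combining with the stable decomposition shows that every $\psi\in W$ restricts to an essentially bounded function on $\RR n\setminus A$ whose $L^\infty$-norm is controlled by $\|\psi\|_2$.

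Finally, I exhibit a witness $\psi_0\in L^2(\RR n)\setminus W$. Since $|\RR n\setminus A|=\infty$, pick a Lebesgue density point $x_0$ of $\RR n\setminus A$ and set $\psi_0(x):=|x-x_0|^{-\alpha}\chi_{B(x_0,1)}(x)$ for any $0<\alpha<n/2$. A direct integration gives $\psi_0\in L^2(\RR n)$, while density $1$ of $\RR n\setminus A$ at $x_0$ forces $\{x\in\RR n\setminus A:|\psi_0(x)|>M\}$ to have positive measure for every $M>0$, so $\psi_0$ is essentially unbounded on $\RR n\setminus A$ and therefore lies outside $W$. Thus $V=W^\perp\ne\{0\}$, and any unit vector in $V$ proves the lemma. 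The only non-trivial ingredient is the closedness of $W$ (equivalently, the quantitative uncertainty principle of Nazarov--Jaming); once that is in hand, the regularity of band-limited functions with $|B|<\infty$ provides the obstruction.
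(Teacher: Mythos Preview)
Your argument is correct and genuinely different from the paper's. The paper simply invokes Corollary~2.5.A of Havin--J\"oricke, whose content (as described) is the \emph{surjectivity} of the map $\varphi\mapsto(\varphi|_A,\hat\varphi|_B)$; note that surjectivity alone does not literally give a nonzero element of the kernel, so the paper is really relying on the full statement in the cited source (which indeed asserts that the kernel is infinite-dimensional). You instead supply a self-contained proof modulo the Nazarov--Jaming inequality already quoted in the paper.

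A few remarks on your route versus the classical one. Your deduction of $\|P_A\mathcal{F}^{-1}P_B\mathcal{F}\|<1$ from Nazarov--Jaming is correct; the same conclusion can be obtained more cheaply from the facts that $P_AQ_B$ is Hilbert--Schmidt (its integral kernel lies in $L^2$ since $|A|,|B|<\infty$) and that $L^2_A\cap\mathcal{F}^{-1}(L^2_B)=\{0\}$ (Benedicks/Amrein--Berthier), since a compact operator that does not attain norm $1$ on a unit vector has norm strictly below $1$. Either way, the Friedrichs-angle lemma then gives closedness of $W$, and this step is essential: without it $W$ could be a proper \emph{dense} subspace and $W^\perp$ would vanish. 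For the obstruction $W\neq L^2$, your observation that any $\psi\in W$ agrees a.e.\ on $\RR n\setminus A$ with a band-limited (hence bounded, continuous) function is exactly the right idea; the stable decomposition is convenient but not strictly needed here, since any decomposition $\psi=f+g$ already yields $\psi|_{A^c}=g|_{A^c}\in L^\infty$. Your density-point construction of an $L^2$ function essentially unbounded on $\RR n\setminus A$ then finishes cleanly.

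In short: the paper defers to a reference, while you prove the result directly. Your approach has the merit of making transparent \emph{why} $W$ cannot exhaust $L^2$ (a regularity mismatch), at the modest cost of invoking a quantitative uncertainty principle where a qualitative one would suffice.
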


\begin{proof}
This follows by Corollary 2.5.A in \cite{havin_uncertainty_2012}. Actually it is
shown that there always is a $\varphi\in L^2(\RR n)$ such that for any given pair
$g,h$ of functions in $L^2(\RR n)$ the restriction of $\varphi$ to $A$ and that
of $\hat\varphi$ to $B$ coincides with the restriction of $g$ to $A$ and
$h$ to $B$ respectively. 
\end{proof}

\begin{prop}
Let each of $A,B$ be the complement of a bounded open subset in $\RR n$, then 
\begin{displaymath}
       e(A\times B) = 1.
\end{displaymath}
\end{prop}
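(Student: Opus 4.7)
My plan is to combine the product--measure structure of $\mu_\varphi$ with a direct invocation of the preceding lemma. The upper bound $e(A\times B)\leq 1$ is trivial: since by construction $\mu_\varphi=\nu_\varphi\otimes\nu_{\hat\varphi}$, one has
\[ \mu_\varphi(A\times B)=\nu_\varphi(A)\,\nu_{\hat\varphi}(B)\leq 1 \]
for every admissible $\varphi$, because both factors are probabilities of closed sets.

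For the matching lower bound I would set $U:=\RR n\setminus A$ and $V:=\RR n\setminus B$; by hypothesis both are bounded open sets, so $|U|+|V|<\infty$, and the preceding lemma applies to produce a (normalizable) $\psi\in L^2(\RR n)$ with $\nu_\psi(U)+\nu_{\hat\psi}(V)=0$. Non-negativity of each summand forces $\nu_\psi(U)=\nu_{\hat\psi}(V)=0$, hence $\nu_\psi(A)=\nu_{\hat\psi}(B)=1$, and therefore $\mu_\psi(A\times B)=1$. This gives $e(A\times B)\geq 1$, and combined with the upper bound closes the argument.

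The one point demanding care is that the lemma must deliver a $\psi$ that is not almost-everywhere zero, so that the normalization step is legitimate; this is exactly the substantive content of the Havin--Jöricke restriction theorem cited in the proof of the lemma, and is the reason why the finite-measure hypothesis on $U,V$ (i.e. boundedness) cannot be dropped. Should one wish to bypass that theorem, the same conclusion can be reached as a limit by taking Weyl--Heisenberg translates $\psi_{a,b}(x):=e^{i\langle b,x\rangle}\varphi_0(x-a)$ of a standard Gaussian $\varphi_0$: both $|\psi_{a,b}(x)|^2$ and $|\widehat{\psi_{a,b}}(k)|^2$ are Gaussians centered at $a$ and $b$ respectively, so the boundedness of $U$ and $V$ forces $\nu_{\psi_{a,b}}(U)$ and $\nu_{\hat\psi_{a,b}}(V)$ to vanish as $|a|,|b|\to\infty$, and hence $\mu_{\psi_{a,b}}(A\times B)\to 1$. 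In either route, no step is technically hard; the whole content of the proposition lies in the preceding lemma.
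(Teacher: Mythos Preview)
Your proof is correct and follows essentially the same route as the paper: define $U=\RR n\setminus A$, $V=\RR n\setminus B$, invoke the preceding lemma on the finite-measure pair $(U,V)$ to obtain $\psi$ with $\nu_\psi(U)=\nu_{\hat\psi}(V)=0$, and conclude $\mu_\psi(A\times B)=1$. Your treatment is in fact slightly more careful than the paper's (you spell out the trivial upper bound and the normalization caveat), and the alternative Gaussian-translate argument you sketch is a pleasant self-contained bypass of the Havin--J\"oricke citation, though not needed here.
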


\begin{proof}
Set $U={\RR n}\backslash A,\, V={\RR n}\backslash B$, then for each $\varphi$ we 
have $\mu_{\varphi}(A\times B)=\nu_{\varphi}(A)\,
\nu_{\hat\varphi}(B)=(1-\nu_{\varphi}(U))\, (1-\nu_{\hat\varphi}(V))$. By 
the lemma above we may choose a $\psi$ such that 
$\nu_{\psi}(A)=\nu_{\hat\psi}(B) = 0$, thus $\mu_{\psi}(A\times B)=1$.
\end{proof}

\begin{prop}
Let $A,B$ be subsets of $\RR n$ such that $|A|+|B|<\infty$, then for every 
normalized $\varphi\in L^2(\RR n)$
\begin{equation}\label{eUbound}
   \mu_{\varphi}(\chi_{A\times B}) \leq 
      \left( 1-\frac{\alpha}{2} e^{-\beta\, \eta(A,B)} \right)^2
\end{equation}
with constants $\alpha,\beta$ and $\eta$ as in (\ref{JamingNazarov}). 
\end{prop}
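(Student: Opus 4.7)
The plan is to reduce the claim to a one-line application of the Nazarov--Jaming inequality (\ref{JamingNazarov}) together with the elementary AM--GM inequality. The key observation is that $\mu_\varphi$ is by definition the \emph{product} measure $\nu_\varphi\otimes\nu_{\hat\varphi}$, so the indicator of a product set factors:
\begin{displaymath}
    \mu_\varphi(\chi_{A\times B}) = \nu_\varphi(A)\,\nu_{\hat\varphi}(B).
\end{displaymath}
Writing $a\assign\nu_\varphi(A)$ and $b\assign\nu_{\hat\varphi}(B)$, what we have to bound is simply the product $ab$, with both $a,b\in[0,1]$ because $\varphi$ is normalized.

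Next I would translate the Nazarov--Jaming bound into a bound on $a+b$. Since $\nu_\varphi(\RR n\setminus A)=1-a$ and $\nu_{\hat\varphi}(\RR n\setminus B)=1-b$, inequality (\ref{JamingNazarov}) applied to the \emph{same} pair $(A,B)$ gives
\begin{displaymath}
    (1-a)+(1-b)\;\geq\;\alpha\,e^{-\beta\,\eta(A,B)},
\end{displaymath}
equivalently $a+b\leq 2-\alpha e^{-\beta\eta(A,B)}$. This is the only analytic input; everything else is algebra.

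Finally, the arithmetic--geometric mean inequality yields $ab\leq\bigl(\tfrac{a+b}{2}\bigr)^2$, and substituting the bound on $a+b$ gives
\begin{displaymath}
    \mu_\varphi(\chi_{A\times B})=ab\;\leq\;\left(1-\tfrac{\alpha}{2}e^{-\beta\,\eta(A,B)}\right)^{2},
\end{displaymath}
which is exactly (\ref{eUbound}). The right-hand side lies in $[0,1]$ because $\alpha e^{-\beta\eta(A,B)}\leq (1-a)+(1-b)\leq 2$, so squaring preserves the inequality and no sign issue arises.

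There is essentially no obstacle here: the statement is a corollary of (\ref{JamingNazarov}), and the only point worth checking is that the AM--GM step is tight enough to be useful (it is, because the Nazarov--Jaming bound is symmetric in the two deficits $1-a$ and $1-b$, so replacing the sum by twice the mean loses nothing at the level of worst case). If one wanted a slightly sharper form, one could keep $ab$ maximized subject to $a+b\le 2-\alpha e^{-\beta\eta(A,B)}$ and $a,b\le 1$, but the symmetric AM--GM bound is already what the proposition asserts.
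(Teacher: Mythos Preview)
Your proof is correct and follows essentially the same route as the paper: factor $\mu_\varphi(\chi_{A\times B})=\nu_\varphi(A)\,\nu_{\hat\varphi}(B)$, apply the Nazarov--Jaming inequality to bound $(1-a)+(1-b)$ from below, and then use the AM--GM inequality to pass from the sum bound to the product bound. The paper's proof is just a terser version of exactly these three steps.
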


\begin{proof}
Using (\ref{JamingNazarov})  we get 
$2-\left(\nu_{\varphi}(A)+\nu_{\hat\varphi}(B)\right)\geq \alpha 
  e^{-\beta\,\eta(A,B)}$. Dividing both sides by two and applying the
  arithmetic-geometric mean inequality yields 
  $ \sqrt{\nu_{\varphi}(A)\nu_{\hat\varphi}(B)}\leq 
     1-\frac{\alpha}{2} e^{-\beta\, \eta(A,B)}$, which implies (\ref{eUbound}).
\end{proof}

\begin{corollary}\label{corollary_e}
Let $\Lambda\subset\Gamma_n$ be compact, then
\begin{equation}
             e(\Lambda) \leq 
      \left( 1-\frac{\alpha}{2} e^{-\beta\, 
          \eta(\pi_1(\Lambda),\pi_2(\Lambda))} \right)^2,
\end{equation}
where $\pi_1,\pi_2:\Gamma_n \rightarrow {\RR n}$ are the standard projections
and the constants $\alpha,\beta,\eta$ are as in (\ref{JamingNazarov}). 
\end{corollary}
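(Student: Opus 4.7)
The plan is to reduce the compact set $\Lambda$ to a product situation so that the previous proposition applies directly. First, since $\Lambda$ is compact in $\Gamma_n = \RR{n} \times \RR{n}$ and the projections $\pi_1, \pi_2$ are continuous, both $A := \pi_1(\Lambda)$ and $B := \pi_2(\Lambda)$ are compact subsets of $\RR{n}$. In particular, each is bounded, so $|A|, |B| < \infty$, and therefore $\eta(A,B)$ from the Nazarov--Jaming inequality is well defined.

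Next, I would use the trivial but essential containment $\Lambda \subseteq \pi_1(\Lambda) \times \pi_2(\Lambda) = A \times B$, which yields the pointwise bound $\chi_\Lambda \leq \chi_{A \times B}$. Since $\mu_\varphi$ is a positive measure, this gives, for every normalized $\varphi \in L^2(\RR{n})$,
\begin{equation*}
\mu_\varphi(\Lambda) \;=\; \int_{\Gamma_n} \chi_\Lambda \, d\mu_\varphi \;\leq\; \int_{\Gamma_n} \chi_{A \times B} \, d\mu_\varphi \;=\; \mu_\varphi(\chi_{A \times B}).
\end{equation*}

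Now the previous proposition applies to the pair $(A,B)$ and delivers
\begin{equation*}
\mu_\varphi(\chi_{A \times B}) \;\leq\; \left(1 - \frac{\alpha}{2} e^{-\beta \, \eta(A,B)}\right)^2,
\end{equation*}
with the constants $\alpha, \beta$ from the Nazarov--Jaming inequality (\ref{JamingNazarov}). Combining these two bounds and taking the supremum over all normalized $\varphi \in L^2(\RR{n})$ on the left gives
\begin{equation*}
e(\Lambda) \;\leq\; \left(1 - \frac{\alpha}{2} e^{-\beta \, \eta(\pi_1(\Lambda), \pi_2(\Lambda))}\right)^2,
\end{equation*}
which is the stated estimate. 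There is no real obstacle here: the corollary is essentially a packaging of the previous proposition via the projection inclusion, the only mild point being that one must verify that compactness of $\Lambda$ is exactly what is needed to guarantee that $\eta(\pi_1(\Lambda), \pi_2(\Lambda))$ makes sense (finite Lebesgue measure of each projection). This would also suggest an obvious generalization to any $\Lambda$ whose projections have finite Lebesgue measure.
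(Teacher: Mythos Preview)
Your proof is correct and follows the same approach as the paper: the paper's own proof simply notes that the projections of a compact set are compact, hence measurable and of finite Lebesgue measure, leaving the containment $\Lambda\subseteq\pi_1(\Lambda)\times\pi_2(\Lambda)$ and the application of the preceding proposition implicit. You have made these implicit steps explicit, which is fine.
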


\begin{proof}
The images of the projections $\pi_1,\pi_2$ are again compact, thus measurable
and of finite Lebesgue measure. 
\end{proof}
\rm
If we replace compactness by finite Lebesgue measure or closed only, then
we have to deal with analytic sets. 
Since 
\begin{displaymath}
    e(\Lambda)=\sup_{\varphi} \mu_{\varphi}(\Lambda)=
      1-\inf_{\varphi} \mu_{\varphi}(\Gamma_n\backslash\Lambda)
\end{displaymath}
we have the relation to $E_\HH$ with $\HH=\chi_{\Gamma_n\backslash\Lambda}$. Since
$U=\Gamma_n\backslash\Lambda$ is open we can also apply (\ref{Strassen}). 
\subsection{Optimal Bounds}
Optimal bounds are (at the time of writing, 2015) not known. To illustrate the
difficulties one encounters when trying to find maximizers of $e(\Lambda)$, let us
consider the case where $\Lambda=\{x^2+k^2\leq R^2\}\subset\RR{2}$. Using the first
Hermite function $\psi_0(x)=\pi^{-\frac{1}{4}}\,e^{-\frac{1}{2}\,x^2}$ as a trial
function, we obtain with $\psi_0=\hat\psi_0$ in mind:
\begin{equation}\label{hermite_disk}
   e(\Lambda) \geq \frac{1}{\pi}\int_{\{x^2+k^2\leq R^2\}} e^{-(x^2+k^2)}\, dx\,dk
   = 1-e^{-R^2}.
\end{equation}
Nazarov's inequality (\ref{JamingNazarov}) for $d=1$, with 
$A=[-\frac{a}{2},\frac{a}{2}]$, $B=[-\frac{b}{2},\frac{b}{2}]$, reads with $\psi_0$:
\begin{equation*}
    \frac{1}{\sqrt{\pi}} \int_{\mathbb{R}\backslash A} e^{-x^2}\,dx +
    \frac{1}{\sqrt{\pi}} \int_{\mathbb{R}\backslash B} e^{-k^2}\,dk =
    \operatorname{erfc}({\frac{a}{2}})+\operatorname{erfc}({\frac{b}{2}})
    \geq \alpha\,e^{-\beta\, a b},
\end{equation*}
where 
\begin{align*}
    \operatorname{erfc}(x) & = 1-\operatorname{erf}(x) \\[5pt]
                           & = \frac{2}{\sqrt\pi} \int_x^\infty e^{-t^2}\, dt \\[5pt]
\end{align*} 
denotes the {\sl complementary error function}. Now, when we set $a=b=2\,R$, we get
\begin{equation}
    \operatorname{erfc}(R) \geq \frac{\alpha}{2}\,e^{-4\beta R^2}
\end{equation}
Indeed, there exist quite optimal Chernoff-type bounds (\cite{Chang2011}, Theorem 2) for 
$\operatorname{erfc}$: 
\begin{equation}
   \operatorname{erfc}(R) \geq \rho\, e^{-\sigma\,R^2}
\end{equation}
if $\sigma > 1$ and $0<\rho\leq \sqrt{\frac{2\,e}{\pi}} \frac{\sqrt{\sigma-1}}{\sigma}$.
On the other hand, there is also an upper bound (\cite{Chang2011}, Theorem 1) of the
same kind:
\begin{equation}
   \operatorname{erfc}(R) \leq \kappa\, e^{-\lambda\,R^2}
\end{equation}
provided that $\kappa \geq 1$ and $0<\lambda\leq 1$, more precisely if and only if
$\kappa,\lambda$ satisfy these relations. It is actually believed that a Gaussian
function with $A,B$ balls of radius $R$ is optimal (see introduction in \cite{jaming:hal-00120268}), however, there is no proof yet. It seems to be even
more difficult to prove optimality if $\Lambda$ is not a product, e.g. as in
(\ref{hermite_disk}). The following lines might illustrate this. 

For $Q=[-R,R]\times [-R,R]$ we get
\begin{equation*}
    e(Q) \geq \operatorname{erf}(R)^2 = (1-\operatorname{erfc}(R))^2,
\end{equation*}
which is in accordance with (\ref{eUbound}) and the bounds of $\mathtt{erfc}$
discussed above. Since $\Lambda\subset Q$ we also have that
\begin{displaymath}
  e(Q)\geq e(\Lambda) \geq 1-e^{-R^2}, 
\end{displaymath}
so that by Corollary \ref{corollary_e} the upper bound for $e(\Lambda)$ is the same
as that for $e(Q)$, what is certainly not optimal. Referring to the remark at the
end of Section \ref{sympl_trafos}, one might conjecture that $\psi_0$ is a 
minimizer of $E_{\chi_{\RR{2}\backslash\Lambda}}(\varphi)$, i.e.
\begin{displaymath}
    E_{\chi_{\RR{2}\backslash\Lambda}}(\psi_0)=1-e(\Lambda)=e^{-R^2},
\end{displaymath}    
however, it seems that $\psi_0$ is not a solution of the Euler equation 
(\ref{EulerEq}).

\subsection{A Theorem of A. Steiner}
Uncertainty inequalities occur in various forms, very often disguised as a
localization principle. There is a nice theorem by Antonio Steiner \cite{Steiner1974},
that is probably not widely known (as the article is in German), so we cite it here
using our notation:
\begin{theorem}[Theorem 3, \cite{Steiner1974}]
Let $A,B\subset\mathbb{R}$ be two measurable sets, such that $|A|+|B|<\infty$. 
Suppose $\nu_{\varphi}(A) + \nu_{\hat\varphi}(B) > 1$, then
\begin{equation}
  |A| |B| \geq \max\{c_1,c_2\},
\end{equation}
where
\begin{displaymath}
    c_1= \frac{2\pi}{\nu_{\varphi}(A)}
   \left(\sqrt{ \nu_{\hat\varphi}(B)}-\sqrt{1-\nu_{\varphi}(A)}\right)^2  
\end{displaymath}
and
\begin{displaymath}
   c_2= \frac{2\pi}{\hat\nu_{\varphi}(B)}
   \left(\sqrt{ \nu_{\varphi}(A)}-\sqrt{1-\nu_{\hat\varphi}(B)}\right)^2. 
\end{displaymath}
\end{theorem}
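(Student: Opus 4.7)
The plan is to deploy the classical time-frequency projection calculus. Let $P_A$ denote multiplication by $\chi_A$ on $L^2(\mathbb{R})$ and set $Q_B := \mathcal{F}^{-1} M_{\chi_B} \mathcal{F}$ for the Fourier-band projection onto $B$; both are orthogonal projections and
\[
 \|P_A \varphi\|^2 = \nu_\varphi(A), \qquad \|Q_B \varphi\|^2 = \nu_{\hat\varphi}(B).
\]
With the paper's unitary Fourier convention the kernel of $Q_B$ is $K_B(x,y)=(2\pi)^{-1}\!\int_B e^{ik(x-y)}\,dk$, and Plancherel in the $y$-variable gives
\[
 \|P_A Q_B\|_{\mathrm{HS}}^2 \;=\; \int_A\!\int_{\mathbb R}|K_B(x,y)|^2\,dy\,dx \;=\; \frac{|A|\,|B|}{2\pi},
\]
so in particular $\|P_A Q_B\|_{\mathrm{op}} \leq \sqrt{|A||B|/(2\pi)}$.

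Next I would orthogonally decompose $\varphi = Q_B\varphi + (I-Q_B)\varphi$, apply $P_A$, and take norms to obtain
\[
 \sqrt{\nu_\varphi(A)} \;\leq\; \|P_A Q_B\varphi\| \;+\; \|P_A(I-Q_B)\varphi\|,
\]
the second summand being at most $\|(I-Q_B)\varphi\| = \sqrt{1-\nu_{\hat\varphi}(B)}$. For the first summand, the critical observation is that $Q_B\varphi$ already lies in the range of $Q_B$, so $Q_B\varphi = Q_B(Q_B\varphi)$ and
\[
 \|P_A Q_B\varphi\| \;\leq\; \|P_A Q_B\|_{\mathrm{op}}\,\|Q_B\varphi\| \;\leq\; \sqrt{\tfrac{|A|\,|B|}{2\pi}}\,\sqrt{\nu_{\hat\varphi}(B)}.
\]
This refinement is what produces the $\nu_{\hat\varphi}(B)^{-1}$ factor in $c_2$; the naive estimate $\|P_A Q_B\varphi\| \leq \|P_A Q_B\|_{\mathrm{op}}\|\varphi\|$ would only yield the Donoho–Stark style bound $|A||B| \geq 2\pi(\sqrt{\nu_\varphi(A)} - \sqrt{1-\nu_{\hat\varphi}(B)})^2$, which is strictly weaker.

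Rearranging yields $\sqrt{\nu_\varphi(A)} - \sqrt{1-\nu_{\hat\varphi}(B)} \leq \sqrt{|A|\,|B|\,\nu_{\hat\varphi}(B)/(2\pi)}$. The hypothesis $\nu_\varphi(A) + \nu_{\hat\varphi}(B) > 1$ is precisely what guarantees positivity of the left side, so squaring delivers $|A|\,|B| \geq c_2$. The bound $|A|\,|B| \geq c_1$ then follows by running the identical argument with $\varphi$ replaced by $\hat\varphi$ and $(A,B)$ swapped, using that the unitary Fourier transform is an isometric involution (up to parity) so $|-A|=|A|$ and $\nu_{\hat{\hat\varphi}}(-A)=\nu_\varphi(A)$. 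The only genuine obstacle is spotting the sharpening $\|P_A\psi\| \leq \|P_A Q_B\|_{\mathrm{op}}\|\psi\|$ valid on $\mathrm{Range}(Q_B)$; everything else is bookkeeping with the Fourier normalization $dm_1(k)=(2\pi)^{-1/2}dk$.
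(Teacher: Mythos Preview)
Your proof is correct and is essentially the paper's argument in operator-theoretic dress: the paper splits $\varphi=P_A\varphi+P_{A^c}\varphi$, takes the Fourier transform, and bounds $\int_B|\widehat{P_A\varphi}|^2$ via Cauchy--Schwarz (the pointwise estimate $|\widehat{P_A\varphi}(k)|\le\sqrt{|A|\nu_\varphi(A)/2\pi}$), which is exactly your Hilbert--Schmidt bound $\|P_AQ_B\|_{\mathrm{op}}\le\sqrt{|A||B|/2\pi}$ applied to the range of $P_A$ rather than $Q_B$. The only difference is that the paper's splitting is on the position side (yielding $c_1$ first) while yours is on the frequency side (yielding $c_2$ first); the ``refinement'' you highlight is precisely the paper's use of $\|\widehat{P_A\varphi}\|=\sqrt{\nu_\varphi(A)}$ rather than $\|\varphi\|=1$ inside the Schwarz step.
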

Recall that $\varphi\in L^2(\mathbb{R})$ with $||\varphi||_2=1$, and
\begin{displaymath}
     \nu_{\varphi}(A)=\nu_{\varphi}(\chi_A)=\int_{A} |\varphi(x)|^2\,dx.
\end{displaymath}
We will sketch the proof here because it is elementary, but the idea is clever.
Writing the Fourier transform of $\varphi$ as
\begin{displaymath}
    \hat\varphi(k)=\frac{1}{\sqrt{2\pi}}\int_A \varphi(x)\,e^{-ikx}\, dx +
                   \frac{1}{\sqrt{2\pi}}\int_{\mathbb{R}\backslash A} 
                      \varphi(x)\,e^{-ikx}\, dx = \hat\varphi_A+\hat\varphi_{A^c},
\end{displaymath} 
we may proceed with
\begin{align*}
   \nu_{\hat\varphi}(B) & = \int_B |\hat\varphi_A+\hat\varphi_{A^c}|^2\, dk \\[5pt]
                        & \leq  \int_B (|\hat\varphi_A|+|\hat\varphi_{A^c}|)^2\, dk \\[5pt]
                        & \leq \ldots \mathrm{2\times\ Schwarz\ inequality} \\[5pt]
                        & \leq  \nu_{\varphi}(A)\frac{|A||B|}{2\pi} + (1-\nu_{\varphi}(A))
     + 2\,\left( \nu_{\varphi}(A)\frac{|A||B|}{2\pi}\right)^\frac{1}{2}\,
          \left(1-\nu_{\varphi}(A)\right)^\frac{1}{2}
\end{align*}
Setting $\xi=\sqrt{\frac{|A||B|}{2\pi}}$,\,$\alpha=\nu_{\varphi}(A)$, and 
$\beta= \nu_{\hat\varphi}(B)$ yields
\begin{displaymath}
    \alpha\xi^2 + 2\sqrt{\alpha}\sqrt{1-\alpha}\,\xi+(\alpha+\beta-1)\geq 0,
\end{displaymath}
implying
\begin{displaymath}
      \xi \geq \frac{\sqrt{\beta}-\sqrt{1-\alpha}}{\sqrt{\alpha}},
\end{displaymath}
that is
\begin{displaymath}
     |A| |B| \geq c_1.
\end{displaymath}
Reversing the roles gives $c_2$. The proof reveals why the condition $\alpha+\beta>1$
is necessary. Not only because it guarantees $\alpha>0\wedge\beta>0$, also because
$\xi\geq0$ is required. Finally, we want to point out the link to (\ref{LocProp}).
\subsection{Tightness}
Since we know that $K_\HH(\varphi)$ is attained by a $\gamma_\varphi$, a minimizing
sequence $\gamma_{\varphi_j}$ of
\begin{displaymath}
     \inf_{\varphi} K_\HH(\varphi)
\end{displaymath}
does not necessarily converge to a measure, and even if it does, it is not for sure
that it is in $\Gamma(\varphi)$ for some $\varphi\in L^2(\RR n)$. By Prokhorov’s 
theorem, however, it is sufficient to show the tightness of the sequence
$\{\gamma_{\varphi_j}\}_{j\geq 1}$ in order to get a weakly convergent subsequence,
that is, $\forall\epsilon>0,\exists K_\epsilon\subset\Gamma_n$, such that
\begin{displaymath}
        \gamma_{\varphi_j}(K_\epsilon) \geq 1-\epsilon
\end{displaymath}
holds $\forall j\in \mathbb{N}$. This is usually not trivial, but we have 
\begin{displaymath}
   e(A_\epsilon\times B_\epsilon)\geq  \gamma_{\varphi_j}(A_\epsilon\times B_\epsilon) 
   \geq \nu_{\varphi_j}(A_\epsilon)\,\nu_{\hat\varphi_j}(B_\epsilon),        
\end{displaymath}
that is, in view of the weak compactness of the unit ball in $L^2$, the problem
may be often reduced to merely consider the marginal measures. If the function
$\HH$ is {\sl inf-compact}, that is if the level sets $\{(x,k):\HH(x,k)\leq r\}$ are compact
for all $r\in\mathbb{R}$, then there are (usually) standard procedures to verify
tightness. For instance, if we take the additional assumption
\begin{displaymath}
     \int_{\Gamma_n} \HH(x,k) d\gamma_{\varphi_j}(x,k) \leq C,
\end{displaymath} 
for all $j\in\mathbb{N}$, and set $K_{1/m}=\{(x,k):\HH(x,k)\leq m\}$, then for all
$\gamma\in \{\gamma_{\varphi_j}\}_{j\geq 1}$
\begin{displaymath}
   m\, \gamma(\Gamma_n\backslash K_{1/m})\leq 
      \int_{\Gamma_n\backslash K_{1/m}} \HH(x,k) d\gamma(x,k) \leq C,   
\end{displaymath}   
thus
\begin{displaymath}
     \gamma(K_{1/m}) \geq 1-\frac{C}{m}.
\end{displaymath}
Note that inf-compact functions are lower semi-continuous, thus we conclude
\begin{displaymath}
    \liminf_{j\rightarrow\infty} \int \HH d\gamma_{\varphi_{m_j}}\geq \int \HH d\gamma_\star,
\end{displaymath}
where $\gamma_\star$ is the weak limit of the subsequence.
\section{Miscellaneous} 
\subsection{Approximation}
Most problems in non-relativistic quantum mechanics are based on Hamiltonian
functions which cleave into a kinetic and a potential part, or at least, they
may be transformed into such a form. There are some well known exceptions, of
course, as soon as we consider terms like e.g. $\sqrt{(k-X(x))^2+\sigma(x)}$.
Where Pauli's equation still fits into the scheme, we leave the foundational
frame if $\HH$ does not separate into $\TT+\VV$ and/or $T\cdot V$. 
From a mathematical point of view this will not be a matter of concern.
However, the physical interpretation of the limit measures ($\gamma_\star$) of
$K_\HH$ in these cases is not quite clear (at least not to me). Nevertheless,
the duality formula (\ref{duality}) suggests the following considerations. 

For $\TT,\VV\in C_b(\RR n)$ let
\begin{equation}
  \varepsilon(\TT,\VV) = 
     \inf\{\nu_\varphi(T)+\nu_{\hat\varphi}(V): \varphi\in L^2(\RR n)\},
\end{equation} 
and
\begin{equation}
  \delta_\HH(\varphi) = 
     \sup_{\TT+\VV\leq \HH} \left(\nu_\varphi(T)+\nu_{\hat\varphi}(V)
       \right),
\end{equation} 
then we have for any admissible $\HH$:
\begin{equation}\label{approx_ineq}
    \lambda_0(\HH)\geq\inf_{L^2(\RR n)} K_\HH(\varphi)=\inf_{L^2(\RR n)}\delta_\HH(\varphi)
      \geq \sup_{\TT+\VV\leq \HH} \varepsilon(\TT,\VV),
\end{equation}
where the equality sign is by (\ref{duality}) and the last inequality is a
consequence of the {\sl max-min inequality}, that is
$\sup_X\inf_Y f(x,y)\leq\inf_Y\sup_X f(x,y)$, valid for arbitrary sets $X,Y$.
It is well known that the latter inequality may be strict, but there are cases
where equality may be proved, e.g. showing the existence of a saddle point by
methods as described in \cite{ekeland_temam}.
More will be published elsewhere.
\subsection{Special case: $\HH(x,k)=|k|^2\,|X(x)|^2$}
In OTT much is known about quadratic costs like $|x-y|^2$. Generally, powers of
a distance\footnote{OTT is well defined on Polish spaces.} 
function $d(x,y)$ play an important role, for obvious reasons when
considering actual transport of goods. We, however, want to consider the {\sl 
cost function}  $\HH(x,k)=|k|^2\,|X(x)|^2$ because we know the minimizers of
$E_{\HH}$ and because it serves as a simple model where $\gamma({\HH})$ is
not constant on $\Gamma(\varphi)$. The vector field 
$X:{\RR n}\rightarrow {\RR n}$ is assumed to be sufficiently smooth (say $C^1$)
for simplicity, then $\HH$ is certainly l.s.c and bounded below (by zero).
Thus 
\begin{displaymath}
   K_{\HH}(\varphi) = \inf_{\gamma\in\Gamma(\varphi)} \int_{{\RR n}\times {\RR n}} 
      |k|^2\,|X(x)|^2 d\gamma(x,k) \geq 0
\end{displaymath}
is attained for some $\gamma_{\varphi}\in\Gamma(\varphi)$. For $E_{\HH}(\varphi)$ we
get
\begin{displaymath}
   E_{\HH}(\varphi) =   \int_{{\RR n}\times {\RR n}} 
      |k|^2\,|X(x)|^2 d\mu_{\varphi}(x,k) = \int_{\RR n} |k|^2\, 
        d\nu_{\hat\varphi}(k) \int_{\RR n} |X(x)|^2\,d\nu_{\varphi}(x).
\end{displaymath}
Now, $|k|^2\in L^1(\nu_{\hat\varphi})$, i.e. $\nu_{\hat\varphi}$ certainly has finite
second order moments if $\varphi\in H^1(\RR n)$, otherwise the integral may be
infinite. The inequality (\ref{ineq}) in Lemma 2 (Appendix), tells us (with 
$f(t)=\frac{1}{2} t^2$):
\begin{displaymath}
    E_{\HH}(\varphi) \geq \frac{1}{4} \left(\int_{\RR n} 
         \operatorname{div}(X)\,d\nu_\varphi \right)^2,
\end{displaymath}   
where we assume $|X|\in L^2(\nu_\varphi)$, and 
$\operatorname{div}(X)\in L^1(\nu_\varphi)$. The equality sign holds if
\begin{displaymath}
    \nabla\varphi(x)+\varphi(x)\,X(x) = 0.    
\end{displaymath} 

For simplicity we proceed with the special case $X(x)=x$, that leads to    
\begin{displaymath}
      E_{\HH}(\varphi) \geq \frac{n^2}{4}
\end{displaymath}
with equality for $\varphi_0(x)=C\,\exp(-\frac{x^2}{2})$. Therefore, we get
for (\ref{eigenval0}) the value $\lambda_0=\frac{n^2}{4}$. Verifying (\ref{EulerEq})
actually shows that $\varphi_0$ is a critical point of $E_\HH(\varphi)$, with
$X(x)=x$. Consequently, $\lambda_0$ is an upper bound to $\inf K_{\HH}$ when 
recalling (\ref{approx_ineq}). Finding lower bounds is much trickier because 
approximating $|x|^2\, |k|^2$ by functions $f(x)+g(k)$ is much more unpleasant
than controlling the function $\langle x,k\rangle$ for example. While the latter
leads to the usual convex conjugates, the former requires e.g. the notion of 
{\sl c-concavity}, that is in our special case,
\begin{displaymath}
   f(x) = \inf_{k\in{\RR n}} \left( |x|^2\, |k|^2 - g(k)\right),
\end{displaymath} 
and {\sl cyclic monotonicity} (see \cite{villani_topics_2003}). Actually,
there is not much known about the measures $\gamma_{\varphi}$ nor about
optimal bounds for $K_{\HH}$. At least, however, we know that
\begin{displaymath}
   K_{\HH}(\varphi_0)= {\operatorname{max}}_{\psi\in L^1(\nu_{\varphi_0})} \,
     \int_{\RR n} [\psi^c(x) - \psi(x)]|\varphi_0(x)|^2\,dx \leq \frac{n^2}{4},
\end{displaymath} 
that is, the dual problem is attained as well 
(see Theorem 5.10 in \cite{villani2008optimal}). Therefore, $\mu_{\varphi_0}$ 
is not optimal anymore. This was to be expected because of the {\sl interaction}
of $x$ and $k$ in $\HH$. The measure $\gamma_{\varphi_0}$ now has to be supported
on some subset (actually a closed c-cyclically monotone) such that 
$\HH \approx_\gamma (\psi_0^c(k)-\psi_0(x))$.
%
%

%
\subsection{Case $n=1$}
We will see that even the one-dimensional case (i.e. phase space $\RR 2$) may
be quite involved. First, let us consider the distribution functions $F,G$ of 
the measures $\nu_\varphi$ and $\nu_{\hat\varphi}$ respectively:
\begin{equation*}
    F(x)=\nu_{\varphi}((-\infty,x])=\int_{- \infty}^{x} |\varphi(x)|^2\,dx,
\end{equation*}
and analogous:
\begin{equation*}
    G(k)=\nu_{\hat\varphi}((-\infty,k])=\int_{- \infty}^{k} |\hat\varphi(k)|^2\,dk.
\end{equation*}
There is a non-decreasing mapping $T:\mathbb{R}\rightarrow\mathbb{R}$ such that
\begin{equation}
     F(x) = G(T(x)),\forall x\in \mathbb{R},
\end{equation}
namely the {\sl increasing rearrangement} $T(x)=(G^{-1}\circ F)(x)$, with 
$G^{-1}(s)=\inf\{t\in\mathbb{R}: G(t)>s\}$.
Consequently, we get an ordinary differential equation in $L^1(\mathbb{R})$ when differentiating both sides:
\begin{equation}\label{deq2}
      |\varphi(x)|^2 = |\hat\varphi(T(x))|^2 \, \frac{dT}{dx}(x).
\end{equation}
Of course, $T$ is not necessarily differentiable on $\mathbb{R}$, but we have
Lebesgue's theorem for the differentiability of monotone functions and
absolute continuity of the distribution functions at hand, though regularity
questions are not in focus here. Note that (\ref{deq2}) coincides with (\ref{MAEQ}) 
for $n=1$.

If $T(x)=x$, then  $|\varphi(x)|^2 = |\hat\varphi(x)|^2$, i.e. this a solution
for any Hermite function $m=0,1,2,\ldots$
\begin{equation}
    \psi_m(x) = \frac{H_m(x)}{\sqrt{2^m\,m!\,\pi^\frac{1}{2}}} e^{-\frac{1}{2}\,x^2},
\end{equation}
because $\hat\psi_m(k)=(-i)^m\,\psi_m(k)$. We conclude 
\begin{displaymath}
   \int_{\RR 2} \HH(x,k) d\gamma(x,k) = \int_{\mathbb{R}} \HH(x,x) |\psi_m(x)|^2\,dx, 
\end{displaymath}
for $\gamma=(Id\times T)_\#\nu_{\psi_m}$. On the other hand - note the difference, 
\begin{displaymath}
    \int_{\RR 2} \HH(x,k) d\mu_{\psi_m}(x,k) = \int_{\RR 2} \HH(x,k) |\psi_m(x)|^2\,
    |\psi_m(k)|^2 \,dx\,dk.
\end{displaymath}
For $\HH(x,k)=|x|^2\,|k|^2$ the integral in the first case evaluates for $m=0$
to $\frac{3}{4}$ and in the second case to $\frac{1}{4}$ (this is optimal). If
$\HH$ is additive ($a(k)+b(x)$), however, then both integrals coincide. 

Generally, we can solve 
\begin{equation}
    g'(x) = \frac{|\psi(x)|^2}{|\hat\psi(g(x))|^2}
\end{equation} 
for various known ground states $\psi$ of the Schrödinger equation, then we have
\begin{displaymath}
     E_0=E_\HH(\psi) = \int_{\mathbb{R}} \left(g(x)^2+\VV(x)\right)\, |\psi(x)|^2 dx.
\end{displaymath}
The convex function $S(x)$ whose existence was claimed in (A.$\ref{mcann_thm}$) is
determined by
\begin{displaymath}
     \frac{dS}{dx}(x) = g(x),
\end{displaymath} 
but there seems to be no connection with the action $S$ in the Hamilton-Jacobi 
equation nor with $p(x)=-\frac{\psi'(x)}{\psi(x)}$, as was to be expected when
recollecting that $\nu_\psi\otimes\nu_{\hat\psi}$ is optimal. Regarding $p(x)$
(which is a vector field if $n>1$), consider the differential equation
\begin{equation*}
   \varphi'(x)+p(x)\,\varphi(x)=0
\end{equation*}
and assume $\varphi\in C^2(\mathbb{R})$, then 
\begin{equation*}
   \varphi''(x)+(p'(x)-p(x)^2)\,\varphi(x)=0,
\end{equation*}
so that when setting
\begin{displaymath}
  p'(x) = p(x)^2 - \VV(x) + \lambda,
\end{displaymath}
we can generate potentials $V$ and ground states of the form
\begin{displaymath}
      \varphi_p(x) = C \, e^{-\int^x p(s)\,ds},
\end{displaymath}
which satisfy: $-\varphi''(x)+\VV(x)\,\varphi(x)=\lambda\,\varphi(x)$. Take
$p(x)=\sinh(x)$ for instance, yielding $\VV(x)=\cosh(x)\,[\cosh(x)-1]$, $\lambda=1$,
and
\begin{displaymath}
    \varphi_0(x)=C\,e^{-\cosh(x)}.
\end{displaymath}
The computation of the Fourier transform is often manageable (\cite{tables_trafos}), 
so that one can find $g(x)$ and check against $p(x)$. As mentioned earlier, it seems 
that unless $p(x)=x$, $p$ is not a transport map, i.e. not admissible.  
Nevertheless, it holds true that
\begin{equation*}
    \int_{\mathbb{R}} p(x)^2\,|\varphi_0(x)|^2\, dx =
    \int_{\mathbb{R}} g(x)^2\,|\varphi_0(x)|^2\, dx =
    \int_{\mathbb{R}} |\varphi_0(x)'|^2\,dx.
\end{equation*}

\subsection{Discrete case}
Concluding these expositions, we want to point out that the discrete case of 
the transportation problem leads to bi-stochastic matrices $m\in \mathcal{B}_n$, so 
that the Kantorovich problem reads (assuming all points have the same {\sl mass})
\begin{displaymath}
  \inf\left\{ \frac{1}{n}\sum_{i,j} \HH(x_i,k_j) \, m_{ij}: m\in \mathcal{B}_n \right\},
\end{displaymath} 
which is a linear minimization problem on the bounded convex subset $\mathcal{B}_n$
of all real $n\times n$ matrices. The analogy to 
\begin{displaymath}
   \int_{\Gamma_n} \HH(x,k)\, d\mu_\varphi,
\end{displaymath}
however, is
\begin{equation}\label{discreteH}
     \sum_{i,j} \HH(x_i,k_j) |\varphi(x_i)|^2\, |\hat\varphi(k_j)|^2 \Delta x\,\Delta k,
\end{equation}
where the (normalized) complex vectors 
\begin{displaymath}
(\varphi(x_1),\ldots,\varphi(x_n))\ \mathrm{and}\ 
(\hat\varphi(k_1),\ldots,\hat\varphi(k_n)) 
\end{displaymath}
are related by the {\sl discrete (unitary)
Fourier} transformation. This leads to interesting problems  
that are slightly harder but are also relatively easy accessible from the numerical 
viewpoint due to the {\sl Fast Fourier Transform} algorithm. 
The standard Hamiltonian $\HH(x,k)=\frac{\hbar^2}{2m} k^2+\VV(x)$ even admits a 
closed form for the sum in (\ref{discreteH}), as is outlined in the two papers \cite{BalintGrid2} and
\cite{BalintGrid1}:
\begin{equation}\label{balintEq}
\HH_{ij}^0 = \begin{cases} 
  \frac{h^2}{4 m L^2}\left[\frac{N^2+2}{6}\right]+V_i 
    &\mbox{if } i=j \\
    (-1)^{(i-j)} \frac{h^2}{4 m L^2} \, \left[\sin(\pi \frac{(i-j)}{N})\right]^{-2}
    & \mbox{if } i\neq j \end{cases}, 
\end{equation}
where $\HH_{ij}^0 = \HH_{ij} \Delta x$, $L=N \Delta x$ and $\Delta k=\frac{2\pi}{L}$.
This provides a fast method to compute eigenvalues and eigenvectors, just computing
the potential at the grid points ($V_i$) then diagonalizing. There is an implementation
in {\tt FORTRAN77} by the authors
\begin{verbatim}
FGHEVEN 
C      F. Gogtas, G.G. Balint-Kurti and C.C. Marston,
C      QCPE Program No. 647, (1993).
\end{verbatim}
{\small\tt http://www.chm.bris.ac.uk/pt/dixon/dynamics/fghqcpe.for} \\
where it is also stated:
\begin{verbatim}
C  The analytical Hamiltonian expression given in the reference
C  contains a small error.  The formula should read: 
C
C    H(i,j) = {(h**2)/(4*m*(L**2)} *
C              [(N-1)(N-2)/6 + (N/2)] + V(Xi),   if i=j
C
C    H(i,j) = {[(-1)**(i-j)] / m } *
C              { h/[2*L*sin(pi*(i-j)/N)]}**2 ,     if i#j
\end{verbatim}
So, formula (\ref{balintEq}) should be correct, we only simplified the 
expressions between the square brackets.

Generally, we can easily translate to discrete spaces as follows: let 
$\varphi\in {\CC N}$ be a unit vector, and denote by $\hat\varphi\in {\CC N}$
the unitary DFT: $\hat\varphi = \frac{1}{\sqrt{N}} {\bf F} \varphi$, where
${\bf F}_{jk}=\omega_N^{(j-1)\,k}$ is the Vandermonde matrix, and 
$\omega_N = e^{-2\pi i/N}$ (a $N$-th root of unity). Then the measures
corresponding to $\nu_\varphi$ and $\nu_{\hat\varphi}$ are the linear
functionals (we identify the dual space with $\CC N$ here)
\begin{equation}
     \nu_\varphi = \sum_{i=1}^N |\varphi_i|^2\,\delta_i,\ \ \ 
     \nu_{\hat\varphi} = \sum_{i=1}^N |\hat\varphi_i|^2\,\delta_i,
\end{equation}
where $\delta_i(f)=f_i$ denotes the discrete Dirac measure (equivalently, the
dual base). Then $\Gamma(\varphi)$ corresponds to the matrices 
\begin{displaymath}
  \left\{ \gamma\in \mathbb{M}_{\mathbb{R}_{+,0}}^{N\times N}: 
    \sum_{i=1}^N(\gamma f)_i=\nu_{\varphi}(f),\ 
     \sum_{j=1}^N(\gamma^T f)_j=\nu_{\hat\varphi}(f),\ 
     \forall f\in {\CC N} \right\},
\end{displaymath}
which has at least the member
\begin{displaymath}
    \gamma_0=\nu_{\varphi}\otimes\nu_{\hat\varphi}=
     \sum_{i,j=1}^N |\varphi_i|^2\,|\hat\varphi_j|^2\,\delta_i\otimes \delta_j.
\end{displaymath}
The optimal transport problem between $\nu_\varphi$ and $\nu_{\hat\varphi}$ 
(roughly) reduces therefore to the {\sl linear program}
\begin{equation}
     \inf_{\gamma\in\Gamma(\varphi)} \operatorname{Tr}({\HH}\,\gamma^T)
\end{equation}
which has a solution, of course. Note that the (real) Frobenius scalar product 
$\langle \HH,\gamma\rangle_F$ which is often used in the literature is equal to 
the trace of $\HH \gamma^T$. Usually, an optimal $\gamma$ is a sparse matrix 
where at most $2N-1$ entries are different from zero, therefore, $\gamma_0$
generally is far from optimal. See the appendix for more details. 
\newpage
\begin{appendices}
\section{Examples and Remarks}
We cite here a version formulated by R.\,McCann
\cite{mccann_existence_1995}: 
\begin{theorem}[McCann]\label{mcann_thm}
Let $\mu,\nu\in\mathcal{P}({\RR n})$ and suppose $\mu$ vanishes on (Borel) subsets of $\RR n$ having Hausdorff dimension $n-1$. Then there exists a convex function $\psi$ on $\RR n$ whose gradient $\nabla\psi$ pushes $\mu$ forward to $\nu$. Although $\psi$ may not be unique, the map $\nabla\psi$ is uniquely determined $\mu$-almost everywhere.
\end{theorem}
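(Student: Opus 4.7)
The plan is to realize $\nabla\psi$ as the Monge map arising from the Kantorovich problem with quadratic cost. First I would introduce $c(x,y)=\tfrac{1}{2}|x-y|^{2}$ (equivalently, after expanding the square, the cost $-\langle x,y\rangle$ modulo fixed marginal contributions) and consider the relaxed problem of minimizing $\int c\,d\gamma$ over $\gamma\in\Gamma(\mu,\nu)$. Since the marginals are fixed, the set $\Gamma(\mu,\nu)$ is tight by Prokhorov, hence weakly compact. Combined with the lower semicontinuity of $\gamma\mapsto\int c\,d\gamma$, the direct method yields an optimal transference plan $\gamma_{*}$.

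Next I would invoke Kantorovich duality to conclude that $\mathrm{supp}(\gamma_{*})$ is $c$-cyclically monotone. For the quadratic cost this reduces, after absorbing terms depending only on $|x|^{2}$ and $|y|^{2}$ into the marginals, to ordinary cyclic monotonicity of the support: for any finite family $(x_{i},y_{i})\in\mathrm{supp}(\gamma_{*})$, $i=1,\dots,m$ (indices modulo $m$), the inequality $\sum_{i}\langle y_{i},x_{i+1}-x_{i}\rangle\le 0$ holds. Rockafellar's theorem then yields a proper lower semicontinuous convex function $\psi$ on $\RR n$ whose subdifferential graph contains $\mathrm{supp}(\gamma_{*})$, so $y\in\partial\psi(x)$ for $\gamma_{*}$-almost every $(x,y)$.

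The decisive step is to leverage the hypothesis on $\mu$. A convex function on $\RR n$ is locally Lipschitz on the interior of its domain, hence Lebesgue-a.e.\ differentiable, and the deeper fact (due to Alberti) is that its non-differentiability set is contained in a countable union of $C^{2}$ hypersurfaces, so in particular it has Hausdorff dimension at most $n-1$. By assumption $\mu$ charges no such set, and therefore $\partial\psi(x)=\{\nabla\psi(x)\}$ for $\mu$-a.e.\ $x$. It follows that $\gamma_{*}=(\mathrm{id}\times\nabla\psi)_{\#}\mu$, and projecting onto the second marginal gives $(\nabla\psi)_{\#}\mu=\nu$, which is the desired conclusion.

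For $\mu$-a.e.\ uniqueness of $\nabla\psi$ I would argue by a gluing trick: if two convex potentials $\psi_{1},\psi_{2}$ produced maps $\nabla\psi_{1},\nabla\psi_{2}$ both pushing $\mu$ forward to $\nu$, then the half-sum $\tfrac{1}{2}(\mathrm{id}\times\nabla\psi_{1})_{\#}\mu+\tfrac{1}{2}(\mathrm{id}\times\nabla\psi_{2})_{\#}\mu$ would also belong to $\Gamma(\mu,\nu)$ and achieve the same (in fact optimal) quadratic cost; its support must then still sit inside a single cyclically monotone set, which forces $\nabla\psi_{1}=\nabla\psi_{2}$ off a $\mu$-null set. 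I expect the main technical obstacle to lie not in the existence argument but in the sharp regularity input that the non-differentiability locus of a convex function has Hausdorff dimension at most $n-1$; this is precisely what turns the hypothesis on $\mu$ into the optimal one.
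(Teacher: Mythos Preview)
The paper does not prove this theorem at all; it merely quotes it from McCann's 1995 article and then applies it. So there is nothing in the paper to compare your argument against line by line. That said, your outline is the standard Brenier route and is structurally sound, but it has one genuine gap relative to the statement as formulated.

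Your existence step tacitly assumes that the quadratic cost problem is nontrivial: you need $\int |x|^{2}\,d\mu<\infty$ and $\int |y|^{2}\,d\nu<\infty$ (or at least $\int \langle x,y\rangle^{-}\,d\gamma<\infty$ for some admissible $\gamma$) for the passage from ``$\gamma_{*}$ is optimal'' to ``$\mathrm{supp}(\gamma_{*})$ is $c$-cyclically monotone'' via duality to go through. The theorem, however, imposes no moment condition whatsoever on $\mu$ or $\nu$; they are arbitrary elements of $\mathcal{P}(\RR n)$. If both integrals are infinite, every $\gamma\in\Gamma(\mu,\nu)$ has cost $+\infty$, the duality argument collapses, and you cannot conclude cyclic monotonicity this way. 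The same issue reappears in your uniqueness step, where you assert that both $(\mathrm{id}\times\nabla\psi_{i})_{\#}\mu$ realize the optimal quadratic cost.

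This is precisely the point of the reference the paper cites: McCann's contribution was to bypass the optimization entirely. He approximates $\mu,\nu$ by compactly supported measures, obtains plans with cyclically monotone support at each stage, and passes to a weak limit, noting that cyclic monotonicity of the support is a closed condition. Rockafellar and the differentiability input you describe then finish the job exactly as you wrote. Your identification of the $(n-1)$-dimensional structure of the nondifferentiability set as the key regularity fact is correct; the obstacle you did not anticipate is the moment issue, and it is not a technicality---it is the reason a separate paper was needed after Brenier.
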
  
When we apply this result to the measures defined, we get 
the relation
\begin{equation}
   \int_{\RR n} f(\nabla\psi(x)) |\varphi(x)|^2 \, dx = \int_{\RR n} f(k) 
     |\hat\varphi(k)|^2 \, dk,
\end{equation}
where $\psi$ is the convex function whose existence was asserted. 
If the measures $\mu,\nu$ in the theorem are given by densities $\rho_0,\rho_1$ 
w.r.t. Lebesgue measure then 
\begin{displaymath}
     \int_{\RR n} f(\nabla\psi(x)) \rho_0(x) dx =
     \int_{\RR n} f(y) \rho_1(y) dy 
\end{displaymath}
implies (assuming sufficient regularity, see \cite{mccann_existence_1995} and
references therein) that $\psi$ is a solution to a {\sl Monge-Ampère} equation:
\begin{equation}
    \rho_0(x) = \rho_1(\nabla\psi(x)) \, \det D^2\psi(x).
\end{equation}
for instance,
\begin{equation}
     |\varphi(x)|^2 = |\hat\varphi(\nabla\psi(x))|^2 \, \det D^2\psi(x),
\end{equation}  
which represents a quite remarkable identity. For, if $\varphi$ has compact 
support in $\RR n$ then the Fourier transform $\hat\varphi$ is an analytic
function such that 
\begin{displaymath}
   |\hat\varphi(k)|^2 \leq \frac{|\mathtt{supp}(\varphi)|}{(2\pi)^n}
\end{displaymath}   
uniformly on $\RR n$, yielding the bound
\begin{displaymath}
   \det D^2\psi(x) \geq \frac{(2\pi)^n}{|\mathtt{supp}(\varphi)|}\,|\varphi(x)|^2
\end{displaymath}
On the other hand we recognize that outside $\mathtt{supp}(\varphi)$ either 
$\det D^2\psi(x)=0$ or $\nabla\psi$ must map into zeroes of $\hat\varphi$.
Of course, all these derivations need careful regularity analysis and indeed
most relations hold usually in a weak sense only. If we, however, assume that
$\psi\in C^2(\RR n)$ (for simplicity), then the inequality
\begin{equation}\label{detineq}
       \frac{1}{n} \, \Delta \psi(x) \geq (\det D^2\psi(x))^\frac{1}{n}
\end{equation}  
is an easy consequence of the convexity of $\psi$ and the well known inequality
$x_1+\ldots+x_n\geq n\,\sqrt[n]{x_1\cdot\ldots\cdot x_n}$, valid for all 
non-negative real numbers. Thus we obtain a bound in terms of the Laplacian of
$\psi$: 
\begin{equation}
      \Delta \psi(x) \geq  \,
      \frac{2\pi n}{|\mathtt{supp}(\varphi)|^\frac{1}{n}}\,|\varphi(x)|^\frac{2}{n}
\end{equation}
We will use the following simple lemma in the sequel:
\begin{lemma}
Let $f\in W^{1,\infty}(\RR n)$, $f(0)=0$ and $X\in L^2_{\mathtt{loc}}
(\RR n,\RR n)$ such that 
$\mathtt{div}(X)$ exists and is in $L^\infty_{\mathtt{loc}}(\RR n)$, then for all 
$\phi\in C^1_0(\RR n)$:
\begin{equation}\label{ineq}
     \int_{\RR n} |\nabla\phi(x)|^2\,dx \; 
     \int_{\RR n} |f'(\phi)|^2\, |X(x)|^2 \, dx \geq    
     \left(\int_{\RR n} f(\phi)\, \mathtt{div}(X) \, dx\right)^2
\end{equation}
with equality iff $\nabla\phi(x)+f'(\phi(x))\,X(x)=0$ almost everywhere. In that
case also $\norm{\nabla\phi}^2 = \norm{f'(\phi) X}^2 = |\int f(\phi)\,\mathtt{div}(X)\,dx|$.    
\end{lemma}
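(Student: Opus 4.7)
The plan is straightforward: rewrite the right-hand side by integration by parts so that $\nabla\phi$ and $f'(\phi)X$ appear explicitly, then apply Cauchy--Schwarz in $L^2(\RR n,\RR n)$. The equality case follows from tracking when Cauchy--Schwarz is saturated.

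More precisely, because $\phi\in C^1_0(\RR n)$ and $f(0)=0$ with $f$ Lipschitz (as $f\in W^{1,\infty}$), the composition $f\circ\phi$ is Lipschitz with compact support, and the chain rule gives $\nabla(f\circ\phi)=f'(\phi)\,\nabla\phi$ almost everywhere. First I would justify the identity
\begin{equation*}
   \int_{\RR n} f(\phi)\,\mathtt{div}(X)\,dx = -\int_{\RR n} f'(\phi)\,\nabla\phi\cdot X\,dx,
\end{equation*}
using a standard mollification argument: mollify $f\circ\phi$ to a smooth compactly supported family, apply the classical divergence theorem (the boundary term vanishes because of compact support), and pass to the limit using that $X$ is $L^2_{\mathtt{loc}}$ and $\mathtt{div}(X)$ is $L^\infty_{\mathtt{loc}}$ on the fixed compact set containing $\mathtt{supp}(\phi)$.

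Once the identity is in hand, the inequality is a single application of Cauchy--Schwarz in the Hilbert space $L^2(\RR n,\RR n)$ with the two vector fields $u=\nabla\phi$ and $v=f'(\phi)\,X$:
\begin{equation*}
  \left(\int_{\RR n} \nabla\phi\cdot\bigl(f'(\phi)X\bigr)\,dx\right)^{\!2}
     \leq \int_{\RR n}|\nabla\phi|^2\,dx \cdot \int_{\RR n} |f'(\phi)|^2|X|^2\,dx.
\end{equation*}
Combining with the integration-by-parts identity yields \eqref{ineq}.

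For equality, Cauchy--Schwarz in $L^2(\RR n,\RR n)$ is an equality precisely when $u$ and $v$ are linearly dependent a.e., so $\nabla\phi$ and $f'(\phi)X$ are parallel. Matching the signs in the identity forces the proportionality constant to yield $\nabla\phi(x)+f'(\phi(x))\,X(x)=0$ a.e., and in that case both $\norm{\nabla\phi}^2$ and $\norm{f'(\phi)X}^2$ equal $\left|\int f(\phi)\,\mathtt{div}(X)\,dx\right|$ by substitution. The one point requiring care is the integration-by-parts step under the weak regularity $X\in L^2_{\mathtt{loc}}$, $\mathtt{div}(X)\in L^\infty_{\mathtt{loc}}$; this is the only non-routine ingredient, and it is handled cleanly by mollifying the compactly supported Lipschitz function $f\circ\phi$.
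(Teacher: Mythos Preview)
Your proof is correct and follows the same route as the paper. The paper expands $\norm{\nabla\phi+f'(\phi)X}^2\geq 0$, uses integration by parts on the cross term (the boundary term vanishes because $f(0)=0$ makes $f(\phi)$ compactly supported), and then optimizes the resulting quadratic in a scaling parameter---which is exactly the standard derivation of Cauchy--Schwarz; the paper even remarks that ``this is merely a disguised form of Schwarz's inequality.'' Your version simply invokes Cauchy--Schwarz directly after the same integration-by-parts identity, so the two arguments coincide.
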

We only sketch the easy proof: expanding $\norm{\nabla\phi+f'(\phi)\,X}^2\geq 0$,
yields $\norm{\nabla\phi}^2+\norm{f'(\phi) X}^2 + \langle \nabla f(\phi),X\rangle_{L^2} \geq 0$. The last term may be rewritten to 
$\int \mathtt{div}(f(\phi)\,X)\,dx - \int f(\phi) \mathtt{div}(X)\,dx$. The
first term vanishes by the condition $f(0)=0$ so that $(\ref{ineq})$ follows
by optimizing the quadratic inequality w.r.t. $X$. It is also seen during the
proof that the inequality holds under much weaker conditions and that the 
boundary term may be incorporated if not zero. Replacing $\nabla\phi$ by 
$A\nabla\phi$, where $A$ is matrix function, also allows to consider degenerate
cases (if $A$ is indefinite). In fact, this is merely a disguised form of Schwarz's
inequality.

To illustrate the usefulness of $(\ref{detineq})$, let $X=\nabla\psi$ (the Brenier map to $\phi=\varphi$ from above), then using $(\ref{detineq})$, the right hand side of $(\ref{ineq})$ becomes:
\begin{equation}\label{estim1}
   \left(\int_{\RR n} f(\phi)\, \Delta\psi(x) \, dx\right)^2 \geq
    \frac{(2 \pi n)^2}{|\Omega|^\frac{2}{n}}\,
       \left( \int_{\Omega} f(\phi) |\phi(x)|^\frac{2}{n} \,dx, \right)^2
\end{equation}
where we have assumed $f\geq 0$ and used the abbreviation 
$\Omega=\mathtt{supp}(\phi)$.  
As a simple application let us consider the first Dirichlet eigenfunction $\phi_0$
with eigenvalue $\lambda_0(\Omega)$. Letting $f(t)=\frac{1}{2} t^2$ in 
$(\ref{ineq}),(\ref{estim1})$, and $X=\nabla\psi$ the corresponding Brenier map, 
we get
\begin{equation}\label{dirichlet}
     \int_{\Omega} |\nabla\phi_0(x)|^2\,dx \; 
     \int_{\Omega} |\phi_0|^2\, |\nabla\psi(x)|^2 \, dx \geq    
      \frac{(2 \pi n)^2}{4 |\Omega|^\frac{2}{n}}\,
       \left( \int_{\Omega} |\phi_0(x)|^{2+\frac{2}{n}} \,dx\right)^2
\end{equation}
By definition of $\lambda_0$ and $\psi$ we have 
\begin{displaymath}
    \lambda_0(\Omega)=\int_{\Omega} |\nabla\phi_0(x)|^2\,dx \;  =
     \int_{\Omega} |\phi_0|^2\, |\nabla\psi(x)|^2 \, dx ,
\end{displaymath}
so that $(\ref{dirichlet})$ gives
\begin{displaymath}
 \lambda_0(\Omega) \geq \frac{n \pi}{|\Omega|^\frac{1}{n}} 
    \, \norm{\phi_0}_{2+2/n}^{2+2/n}.
\end{displaymath}
Finally, recalling that generally $|\Omega|^{1-q/p}\,\norm{u}_p^q\leq \norm{u}_q^q$ holds for $p\leq q$, we get with $q=2+2/n$ and $p=2$ and because $\norm{\phi_0}$=1:    
\begin{equation}\label{lbound}
        \lambda_0(\Omega) \geq \frac{n \pi}{|\Omega|^\frac{2}{n}}. 
\end{equation}
This is apparently far from the optimal result stated by the well known 
{\sl Rayleigh-Faber-Krahn} inequality which gives the optimal value 
($B_\Omega=$ball with volume $|\Omega|$):
\begin{displaymath}
    \lambda_0(\Omega)\geq \lambda_0(B_\Omega)= 
      \left(\frac{\omega_n}{|\Omega|}\right)^\frac{2}{n}\, j_{\frac{n}{2}-1,1}^2.
\end{displaymath}
Indeed, the first zero of the Bessel function $J_m$ behaves like 
$j_{d,1} \sim d+O(d^\frac{1}{3})$ for large $d$, so $(\ref{lbound})$ not even
shows the correct asymptotics. Let $Q_{n,L}$ be the cube with side length $L$ in
$\RR n$, then $\lambda_0(Q_{n,L})=n \frac{\pi^2}{L^2}$ and
\begin{displaymath}
     Q_{n,{2/\sqrt{n}}} \subset B_1 \subset Q_{n,2}
\end{displaymath}   
so that by the domain monotonicity property of the eigenvalues
\begin{displaymath}
        n^2\,\frac{\pi^2}{4}\geq \lambda_0(B_1) \geq n\,\frac{\pi^2}{4}
\end{displaymath}
holds. Of course, the estimate $|\hat\varphi(k)|^2\leq\frac{|\Omega|}{2\pi}$ seems
rather crude, yet one would expect a sharper bound for large orders in view of the
fact that the same (at least similar) method provides a proof of the isoperimetric
inequality.     
\section{Discretization and Numerics  (\tmverbatim{UCDFT})}
Given a vector $X = (X_0, \ldots, X_{N - 1}) \in \mathbb{C}^n$, the
{\tmem{discrete Fourier transform}} of $X$ is defined as
\[ \mathtt{DFT} (X)_k = \sum_{j = 0}^{N - 1} X_j e^{- 2 \pi i \frac{jk}{N}}
\]
for $k = 0, \ldots, N - 1.$ The {\tmem{inverse discrete Fourier transform}} is
defined similarly:
\[ \mathtt{IDFT} (X)_j = \frac{1}{N} \sum_{k = 0}^{N - 1} X_k e^{2 \pi i
   \frac{kj}{N}}, \]
where$j = 0, \ldots, N - 1.$
\begin{prop}
  $\mathtt{\tmop{IDFT} \circ \tmop{DFT} = \tmop{Id}}$
\end{prop}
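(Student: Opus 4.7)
The plan is to substitute the definition of $\mathtt{DFT}$ into that of $\mathtt{IDFT}$, swap the finite sums, and reduce to the standard orthogonality relation for $N$-th roots of unity. Concretely, I would first write
\begin{equation*}
    \mathtt{IDFT}(\mathtt{DFT}(X))_j = \frac{1}{N}\sum_{k=0}^{N-1}\left(\sum_{l=0}^{N-1} X_l\,e^{-2\pi i lk/N}\right)e^{2\pi i kj/N},
\end{equation*}
and then, since both sums are finite, interchange them to obtain
\begin{equation*}
    \mathtt{IDFT}(\mathtt{DFT}(X))_j = \sum_{l=0}^{N-1} X_l \cdot \frac{1}{N}\sum_{k=0}^{N-1} e^{2\pi i k(j-l)/N}.
\end{equation*}

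The heart of the argument is then evaluating the inner sum $S_{j,l} := \frac{1}{N}\sum_{k=0}^{N-1} \omega^{k(j-l)}$ with $\omega = e^{2\pi i/N}$. I would treat two cases: if $j \equiv l \pmod{N}$, each summand equals $1$ and $S_{j,l} = 1$; otherwise $\omega^{j-l}\neq 1$ is a root of unity and the geometric series formula gives
\begin{equation*}
    S_{j,l} = \frac{1}{N}\cdot\frac{1 - \omega^{N(j-l)}}{1 - \omega^{j-l}} = 0,
\end{equation*}
because $\omega^{N(j-l)} = 1$. Since $j,l \in \{0,\dots,N-1\}$, the congruence $j\equiv l \pmod N$ reduces to $j = l$, so $S_{j,l} = \delta_{jl}$.

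Substituting back yields $\mathtt{IDFT}(\mathtt{DFT}(X))_j = \sum_{l=0}^{N-1} X_l\,\delta_{jl} = X_j$ for every $j \in \{0,\dots,N-1\}$, which is the desired identity. There is no real obstacle here beyond the geometric-series dichotomy; the only mild care needed is distinguishing the degenerate case $j=l$ (where the denominator $1-\omega^{j-l}$ vanishes) from the generic case. An equivalent and slightly cleaner presentation would be to identify $\mathtt{DFT}$ with the Vandermonde matrix $\mathbf{F}_{kj} = \omega^{-kj}$ and $\mathtt{IDFT}$ with $\tfrac{1}{N}\mathbf{F}^*$, and observe that $\tfrac{1}{N}\mathbf{F}^*\mathbf{F} = I$ is exactly the character orthogonality computed above; this is also the form already used in the discrete-case subsection of the paper.
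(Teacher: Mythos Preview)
Your proof is correct and follows essentially the same approach as the paper: substitute, interchange the finite sums, and invoke the orthogonality relation $\sum_{k=0}^{N-1} e^{2\pi i k(j-l)/N} = N\delta_{jl}$. The paper simply quotes this identity without the explicit geometric-series dichotomy you spell out, so your version is slightly more detailed but otherwise identical.
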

\begin{proof}
  \[ \mathtt{IDFT} \left( \mathtt{DFT} (X) \right)_j = \frac{1}{N} \sum_{k =
     0}^{N - 1} \left( \sum_{l = 0}^{N - 1} X_l e^{- 2 \pi i \frac{lk}{N}}
     \right) e^{2 \pi i \frac{kj}{N}} \]
  \[ = \frac{1}{N} \sum_{l = 0}^{N - 1} X_l  \sum_{k = 0}^{N - 1} e^{2 \pi i
     \frac{k (j - l)}{N}} = \frac{1}{N} \sum_{l = 0}^{N - 1} X_l N \delta_{j
     l} = X_j . \]
\end{proof}
\begin{remark}
  By symmetry we also have $\mathtt{\tmop{DFT} \circ \tmop{IDFT} =
  \tmop{Id}}$. Note that \tmverbatim{DFT} is not unitary because of the
  asymmetry of the factor $\frac{1}{N}$ in \tmverbatim{IDFT}. This could be
  fixed by using $\frac{1}{\sqrt{N}}$ in both transformations, however, one
  has to be careful when using numeric packages:
\end{remark}
\[ \hat{X} \assign \frac{1}{\sqrt{N}} \mathtt{DFT} (X), \hspace{2em} \check{X}
   \assign \sqrt{N} \mathtt{IDFT} (X) . \]
For our purposes we are more interested in the {\sl Centered Discrete Fourier
Transform} ({\tt UCDFT}). That is why we start from scratch. For details we
refer to \cite{auslander1979},\cite{Grunbaum1982} and \cite{Mugler2011}. 
\subsection{Discretization of the Fourier Transform}
Let us recall the definition of the unitary Fourier transformation on
$\mathbb{R}$:
\begin{definition}
  Let $f \in L^1 (\mathbb{R})$.
  \[ \hat{f} (k) = \frac{1}{\sqrt{2 \pi}} \int_{- \infty}^{\infty} f (x) e^{-
     ikx} \: d x \]
\end{definition}
An obvious approach to discretize $\hat{f}$ is
\[ \hat{f}_{L, M} (k) = \frac{1}{\sqrt{2 \pi}}  \sum_{m = - M}^M f (m \Delta
   x) e^{- imk \Delta x} \Delta x, \]
where $\Delta x = \frac{L}{2 M + 1}$, and $f$ is assumed to be continuous from
now on. We are here not interested in the quality of such approximations, only
in the connection to the \tmverbatim{DFT.} To recover $f$ from
$\widehat{f_{}}_{L, M}$ at the points $x_m = m \Delta x$, we have to evaluate
$\widehat{f_{}}_{L, M}$ at the points
\[ k_n = n \Delta k, \hspace{2em} \tmop{where} \hspace{1em} \Delta k = \frac{2
   \pi}{L} = \frac{2 \pi}{(2 M + 1) \Delta x}, \]
where $n = - M, \ldots, M.$
\
\begin{definition}
  Let $x_m = m \Delta x$, $k_n = n \Delta k$, where $n, m = - M, \ldots, M$,
  and $L = (2 M + 1) \Delta x.$
  \[ \hat{f}_{L, M} (k_n) : = \frac{1}{\sqrt{2 \pi}}  \sum_{m = - M}^M f (x_m)
     e^{- ix_m k_n} \Delta x. \]
\end{definition}
\begin{prop}
  With the definitions above we can recover $f$ at the points $x_r = r \Delta
  x$ by
  \[ f (x_r) = \frac{1}{\sqrt{2 \pi}}  \sum_{n = - M}^M \hat{f}_{L, M} (k_n)
     e^{ix_r k_n} \Delta k. \]
  $r = - M, \ldots, M.$
\end{prop}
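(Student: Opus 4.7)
The plan is to substitute the definition of $\hat{f}_{L,M}(k_n)$ into the right-hand side and reduce the proof to a discrete orthogonality relation. Writing out the double sum,
\begin{equation*}
\frac{1}{\sqrt{2\pi}}\sum_{n=-M}^{M}\hat{f}_{L,M}(k_n)\,e^{ix_r k_n}\,\Delta k
= \frac{\Delta x\,\Delta k}{2\pi}\sum_{m=-M}^{M} f(x_m)\sum_{n=-M}^{M} e^{i(x_r-x_m)k_n},
\end{equation*}
I would then use the identity $\Delta x\,\Delta k = 2\pi/(2M+1)$, coming from the definition $L=(2M+1)\Delta x$ and $\Delta k = 2\pi/L$, to rewrite the prefactor as $1/(2M+1)$ and the inner exponent as $2\pi i (r-m)n/(2M+1)$.

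Next I would isolate the inner sum $S_{r,m}\assign \sum_{n=-M}^{M} \omega^{(r-m)n}$ with $\omega = e^{2\pi i/(2M+1)}$, a primitive $(2M+1)$-th root of unity. The routine step is to recognize this as a symmetric geometric progression: if $r=m$ every term equals $1$ so $S_{r,m}=2M+1$, while for $r\neq m$ the standard geometric sum formula shows that $S_{r,m}=0$, since $\omega^{r-m}\neq 1$ under the constraint $|r-m|\leq 2M<2M+1$. Thus $S_{r,m}=(2M+1)\,\delta_{rm}$.

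Substituting back collapses the $m$-sum to the single term $m=r$, yielding
\begin{equation*}
\frac{1}{2M+1}\sum_{m=-M}^{M} f(x_m)\,(2M+1)\,\delta_{rm} = f(x_r),
\end{equation*}
which is the desired inversion formula. The only real point requiring care is the orthogonality assertion for $r\neq m$: one must verify that $|r-m|$ is strictly less than $2M+1$ so that $\omega^{r-m}$ is a nontrivial root of unity, but this is automatic from the range $r,m\in\{-M,\dots,M\}$. Everything else is essentially bookkeeping of the normalization constants $\Delta x$, $\Delta k$, and $\sqrt{2\pi}$, so I do not expect any substantive obstacle; the statement is really a finite-dimensional avatar of $\mathtt{IDFT}\circ \mathtt{DFT}=\mathtt{Id}$ established in the previous proposition, adapted to the centered grid.
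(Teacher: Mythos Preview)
Your proof is correct and follows essentially the same route as the paper: substitute the definition of $\hat{f}_{L,M}(k_n)$, interchange the two finite sums, invoke the discrete orthogonality relation $\sum_{n=-M}^{M} e^{i(x_r-x_m)k_n}=(2M+1)\delta_{rm}$, and use $\Delta x\,\Delta k=\frac{2\pi}{2M+1}$ to clear the constants. You are in fact a bit more careful than the paper in justifying the orthogonality step by noting that $|r-m|<2M+1$, which is the point that makes $\omega^{r-m}\neq1$.
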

\begin{proof}
  \[ \frac{1}{\sqrt{2 \pi}}  \sum_{n = - M}^M \left( \frac{1}{\sqrt{2 \pi}} 
     \sum_{m = - M}^M f (x_m) e^{- ix_m k_n} \Delta x \right) e^{ix_r k_n}
     \Delta k = \]
  \[ \frac{1}{2 \pi} \sum_{n = - M}^M \sum_{m = - M}^M f (x_m) e^{- ix_m k_n}
     e^{ix_r k_n} \Delta x \Delta k = \]
  \[ \frac{1}{2 \pi} \sum_{n = - M}^M \sum_{m = - M}^M f (x_m) e^{i (x_r -
     x_m) k_n} \Delta x \Delta k = \]
  \[ \frac{1}{2 \pi} \sum_{m = - M}^M f (x_m)  \left( \sum_{n = - M}^M e^{i
     (x_r - x_m) k_n} \right) \Delta x \Delta k = \]
  \[ \  \]
  \[ \frac{2 M + 1}{2 \pi} \sum_{m = - M}^M f (x_m) \delta_{r m} \Delta x
     \Delta k = \frac{2 M + 1}{2 \pi} f (x_r) \Delta x \Delta k =
      f (x_r) \]
\end{proof}
\
Now, set $N = 2 M + 1.$
\begin{definition}
  We define the vectors 
  \[X = (X_0, \ldots, X_{N - 1}), Y = (Y_0, \ldots, Y_{N
  - 1}) \in \mathbb{C}^N
  \] 
  as follows:
  \[ X_m = f ((m - M) \Delta x) \Delta x, \hspace{1em} m = 0, \ldots, 2 M. \]
  \[ Y_n = \hat{f}_{L, M} ((n - M) \Delta k) \Delta k, \hspace{1em} n = 0,
     \ldots, 2 M. \]
  Note that $2 M = N - 1,$and $\Delta x \Delta k = \frac{2 \pi}{(2 M + 1)} =
  \frac{2 \pi}{N} .$
\end{definition}
\
Therefore, we get from
\[ \hat{f}_{L, M} (k_n) = \frac{\Delta x}{\sqrt{2 \pi}}  \sum_{m = - M}^M f (m
   \Delta x) e^{- imn \Delta k \Delta x}, \hspace{1em} n = - M, \ldots, M, \]
by inserting $X, Y :$
\[ Y_{n + M} = \frac{\Delta k}{\sqrt{2 \pi}}  \sum_{m = - M}^M X_{m + M} e^{-
   2 \pi i \frac{mn}{2 M + 1}} . \]
Now, let $m' = M + m$, $n' = M + n$, then
\[ Y_{n'} = \frac{\Delta k}{\sqrt{2 \pi}} e^{2 \pi i \frac{M (n' - M)}{N}} 
   \sum_{m' = 0}^{N - 1} X_{m'} e^{- 2 \pi i \frac{m'  (n' - M)}{N}} \]
where now $m', n' = 0, \ldots, N - 1.$ Thus we finally obtain
\[ Y_{n'} = \frac{\Delta k}{\sqrt{2 \pi}} e^{2 \pi i \frac{M (n' - M)}{N}} 
   \mathtt{DFT} \left( \left\{ X_{m'} e^{2 \pi i \frac{m' M}{N}} \right\}
   \right)_{n'} . \]
\subsubsection{Matrix representation}\label{MATREP}
Let $N = 2 M + 1, \omega_N = e^{- 2 \pi i / N}$, then we define the $N \times
N$ matrix
\[ U_{m n} = \frac{1}{\sqrt{N}} \omega_N^{(m - M - 1)  (n - M - 1)},
   \hspace{1em} m, n = 1, \ldots, N. \]
Therefore,
\[ \hat{f}_{L, M} (k_{n - M - 1}) : = \frac{\Delta x \sqrt{N}}{\sqrt{2 \pi}} 
   \sum_{m = 1}^N U_{m n} f (x_{m - M - 1}) = \frac{L}{\sqrt{2 \pi N}} 
   \sum_{m = 1}^N U_{m n} f (x_{m - M - 1}) . \]
The inverse is given by
\[ f (x_{r - M - 1}) = \frac{\sqrt{N}}{\sqrt{2 \pi}}  \sum_{n = 1}^N U_{r
   n}^{\star} \hat{f}_{L, M} (k_{n - M - 1}) \Delta k = \frac{\sqrt{2 \pi
   N}}{L}  \sum_{n = 1}^N U_{r n}^{\star} \hat{f}_{L, M} (k_{n - M - 1}) . \]
\
The matrix $U$ is unitary.
\[ \hat{f}_{L, M} = \frac{L}{\sqrt{2 \pi N}} U f \Rightarrow | \hat{f}_{L, M}
   |^2 = \frac{L^2}{2 \pi N} | f |^2 . \]
Note that $| \cdot |$ means the vector norm and not the $\| \cdot \|_2$ norm.
The latter is the former times $\Delta x$ or $\Delta k$ respectively. See
below.
\
\subsubsection{Normalization condition}
The corresponding discrete expression to
\[ \int_{\mathbb{R}} | f (x) |^2 d x \]
is
\[ \sum_{m = - M}^M | f (x_m) |^2 \Delta x = 1. \]
Thus,
\[ \sum_{m = 1}^N | f (x_{m - M - 1}) |^2 = \frac{1}{\Delta x} =
   \frac{N_{}}{L} \Rightarrow \| v_f \| = \sqrt{\frac{N}{L} .} \]
In $k$-space, however,
\[ \sum_{m = 1}^N | g (k_{m - M - 1}) |^2 = \frac{1}{\Delta k} = \frac{L_{}}{2
   \pi} \Rightarrow \| v_g \| = \sqrt{\frac{L}{2 \pi} .} \]
\subsubsection{Grid space calibration}
Recalling the grid spacing in $x$ and $k$ space
\[ \Delta x = \frac{L}{N}, \hspace{1em} \Delta k = \frac{2 \pi}{L}, \]
we see that they usually are different. The condition for $\Delta x = \Delta
k$ is
\[ L^2 = 2 \pi N, \]
thus, the factors in the matrix representation above will become
$\frac{L}{\sqrt{2 \pi N}} = 1$.
\
\
\subsection{The Measures}
\subsubsection{The projection measures (marginals)}
The discrete analogues to $\nu_{\varphi}$ and $\nu_{\hat{\varphi}}$ are
\[ \nu_{\varphi} = \Delta x \sum_{m = - M}^M | \varphi (x_m) |^2 \delta_{x_m},
   \hspace{1em} \nu_{\hat{\varphi}} = \Delta k \sum_{m = - M}^M |
   \hat{\varphi} (k_m) |^2 \delta_{k_m} . \]
\subsubsection{The couplings}
The admissible set of measures is (as in the continuous case):
\[ (\pi_{1, \#} \gamma) (f) = \gamma (\pi_1^{\#} f) = \gamma (f \circ \pi_1) = \]
\[
   \Delta x \Delta k \sum \gamma_{m, n} f (\pi_1 (x_m, k_n)) = \Delta x \sum |
   \varphi (x_m) |^2 f (x_m) . \]
Therefore,
\[ \Gamma (\varphi) = \left\{ \gamma \in \mathbb{R}_+^{N \times N} : \sum_{n
   = - M}^M \gamma_{m n} = | \varphi (x_m) |^2, \sum_{m = - M}^M \gamma_{m n}
   = | \hat{\varphi} (k_n) |^2 \right\} \Delta x \Delta k. \]
\
\subsubsection{Push forward}
\[ T_{\#} \nu_{\varphi} (f) = \nu_{\varphi} (f \circ T) = \]
\[\Delta x \sum_{m = -
   M}^M | \varphi (x_m) |^2 \delta_{x_m} (f \circ T) = \Delta x \sum_{m = -
   M}^M | \varphi (x_m) |^2 f (T (x_m)) \]
\[ T_{\#} \nu_{\varphi} (f) = \nu_{\hat{\varphi}} (f) = \Delta k \sum_{m = -
   M}^M | \hat{\varphi} (k_m) |^2 f (k_m) \]
\[ \sum_{m = - M}^M | \varphi (x_m) |^2 f (T (x_m)) \Delta x = \sum_{m = -
   M}^M | \hat{\varphi} (k_m) |^2 f (k_m) \Delta k \]
\[ \sum_{m = - M}^M | \varphi (T^{- 1} (y_m)) |^2 f (y_m)  \frac{\Delta y}{T'
   (T^{- 1} (y_m))} = \sum_{m = - M}^M | \hat{\varphi} (k_m) |^2 f (k_m)
   \Delta k \]
\[ x_m = T^{- 1} (y_m) \Rightarrow \Delta y = T (x_{m + 1}) - T (x_m) = T'
   (x_m) \Delta x \]
\subsection{A discrete model}
\subsubsection{The energies}
\[ E_H (\varphi) = \sum_{m, n = - M}^M H (x_m, k_n)  | \varphi (x_m) |^2 |
   \hat{\varphi} (k_n) |^2 \Delta x \Delta k \]
\[ K_H (\varphi) = \inf_{\gamma \in \Gamma (\varphi)} \sum_{m, n = - M}^M H
   (x_m, k_n) \gamma_{m n} \Delta x \Delta k \]
\[ M_H (\varphi) = \inf_{T_{\#} \nu_{\varphi} = \nu_{\hat{\varphi}}}  \sum_{m
   = - M}^M H (x_m, T (x_m))  | \varphi (x_m) |^2 \Delta x \]
\subsubsection{Ground state of $E_H$}
With $X_m = \varphi (x_{m - M - 1}) \sqrt{\Delta x}$, $m = 1 \ldots N = 2 M +
1, \lambda_0 = \inf_{\varphi} E_H (\varphi)$ becomes
\[ \lambda_0 (H) = \inf_{| X | = 1} \sum_{1 \leqslant i, j \leqslant N} H_{i
   j}  | X_i |^2  | (U X)_j |^2, \]
where $U$ is the matrix given in \ref{MATREP}, and
\[ H_{i j} = H (x_{i - M - 1}, k_{j - M - 1}) . \]
The normalization condition $| X | = 1$ corresponds to $\| \varphi \|_2 = 1$.
Indeed,
\[ \sum_{m = 1}^N | X_m |^2 = \sum_{m = 1}^N | \varphi (x_{m - M - 1}) |^2
   \Delta x = \sum_{l = - M}^M | \varphi (x_l) |^2 \Delta x. \]
\subsubsection{Special cases: $H = A + B, H = AB$}
\[ \lambda_0 (A + B) = \inf_{| X | = 1} \left\{ \sum_i A_i  | X_i |^2 + \sum_j
   B_j | (U X)_j |^2 \right\} . \]
Consider
\[ \sum_j B_j | (U X)_j |^2 = \sum_{j, l, m} B_j U_{j l} X_l U^{\star}_{j m}
   X_m^{\star}, \]
therefore,
\[ \sum_{j, l, m} (A_j \delta_{j l} \delta_{j m} X_l X^{\star_{}}_m + B_j U_{j
   l} U_{j m}^{\star} X_l X^{\star_{}}_m) = \sum_{l, m} C_{l m} X_l
   X^{\star_{}}_m \]
where
\[ C_{l m} = \sum_j (A_j \delta_{j l} \delta_{j m} + B_j U_{j l} U_{j
   m}^{\star} ) = A_l \delta_{l m} + \sum_j B_j U_{j l} U_{j m}^{\star}  \]
Recalling the definition of $U :$
\[ C_{l m} = A_l \delta_{l m} + \frac{1}{N} \sum_j B_j \omega_N^{(j - M - 1) 
   (l - M - 1)} \omega_N^{- (j - M - 1)  (m - M - 1)} = \]
\[ A_l \delta_{l m} + \frac{1}{N} \sum_j B_j \omega_N^{(j - M - 1)  (l - m)} =
   A_l \delta_{l m} + \frac{\omega_N^{- (M + 1)  (l - m)}}{N} \sum_j B_j
   \omega_N^{j (l - m)} . \]
Since
\[ H (x, k) = A (x) + B (k) . \]
we get
\[ H_{l m} = A_l \delta_{l m} + \frac{\omega_N^{- (M + 1)  (l - m)}}{N}
   \sum_{j = 1}^N B_j \omega_N^{j (l - m)}, \]
where $A_l = A (x_{l - M - 1})$ and $B_m = B (k_{m - M - 1})$.
\
\subsubsection{Special case $B (k) \sim | k |^2$}
For this case we ought to obtain a similar expression as in (\ref{balintEq}), i.e. 
a closed form for the term
\[ \kappa_N = \frac{\omega_N^{- (M + 1)  (l - m)}}{N} \sum_{j = 1}^N B_j
   \omega_N^{j (l - m)}, \]
with $B_j = B (k_{j - M - 1}) = \alpha (j - M - 1)^2,$where $\alpha$ is some
real valued constant. Let $s = l - m$, and recall that $M = \frac{N - 1}{2}$,
then
\[ \kappa_N (s) = \alpha \frac{\omega_N^{- \frac{N + 1}{2} s}}{N} \sum_{j =
   1}^N \left( j - \frac{N + 1}{2} \right)^2 \omega_N^{j s} . \]
A short calculation yields for $s \neq 0 :$
\[ \kappa_N (s) = \alpha (- 1)^s \frac{\cos \left( \frac{\pi s}{N} \right)}{1
   - \cos \left( 2 \frac{\pi s}{N} \right)} = \alpha (- 1)^s
   \frac{\cos \left( \frac{\pi s}{N} \right) }{2 \sin \left( \frac{\pi s}{N}
   \right)^2}, \]
where we used the formulae
\[ S_N (q) = \sum_{j = 1}^N q^j = q^{}  \frac{1 - q^N}{1 - q^{}}, \hspace{1em}
   q S'_N (q) = \sum_{j = 1}^N jq^j, \hspace{1em} q (q S'_N (q))' = \sum_{j =
   1}^N j^2 q^j . \]
If $s = 0,$ we get
\[ \kappa_N (0) = \frac{\alpha}{N} \sum_{j = 1}^N \left( j - \frac{N + 1}{2}
   \right)^2 = \alpha \frac{N^2 - 1}{12} . \]
Thus, the kinetic term $\frac{\hbar^2}{2 m} | k |^2$ becomes
\[ T_{i j} = \frac{h^2}{8 \pi mN}  \left\{ \begin{array}{l}
     \frac{N^2 - 1}{6}, i = j\\
     (- 1)^{(l - m)} \frac{\cos \left( \frac{\pi (i - j)}{N}
     \right) }{\sin \left( \frac{\pi (i - j)}{N} \right)^2}, i \neq j
   \end{array} \right. . \]
With
   \[ \sum_{m = - M}^M V (x_m)  | \varphi (x_m) |^2 \Delta x = \sum_{m = -
      M}^M V (x_m)  | X_m |^2 = \]
   \[ \sum_{m = 1}^N V (x_{m - M + 1})  | X_m |^2 \Rightarrow V_m = V (x_{m -
      M + 1}) = V ((m - M + 1) \Delta x) . \] 
we get finally:
\[ H_{i j} = V_i \delta_{i j} + \frac{h^2}{8 \pi mN}  \left\{ \begin{array}{l}
     \frac{N^2 - 1}{6}, i = j\\
     (- 1)^{(i - j)}  \frac{\cos \left( \frac{\pi (i - j)}{N}
     \right) }{\sin \left( \frac{\pi (i - j)}{N} \right)^2}, i \neq j
   \end{array} \right. . \]
This is different from (\ref{balintEq}) because we use an odd number of grid 
points $(N = 2M + 1),$ whereas (\ref{balintEq}) will become 
(when inserting $L^2 = 2 \pi N$):
 \[ H_{i j}^0 = V_i \delta_{i j} + \frac{h^2}{8 \pi mN}  \left\{
   \begin{array}{l}
     \frac{N^2 + 2}{6}, i = j\\
     (- 1)^{(i - j)}  \frac{1}{\sin \left( \frac{\pi (i -
     j)}{N} \right)^2}, i \neq j
   \end{array} \right. . 
 \]
\subsection{Numerics}
Just for illustration purposes we calculated some quantities using {\sl GNU Octave}
\cite{octave}.
{\small
\begin{verbatim}
  John W. Eaton, David Bateman, Soren Hauberg, Rik Wehbring (2016).
  GNU Octave version 4.2.0 manual: a high-level interactive 
  language for numerical computations.
  URL http://www.gnu.org/software/octave/doc/interpreter/
\end{verbatim}
}
Due to the lack of analytical examples regarding the optimal transport measures
it will be be rather wishful to have some reliable numerical instances at least. 
Of course, the small script following does not claim either correctness nor completeness.

\includepdf[pages=1-last]{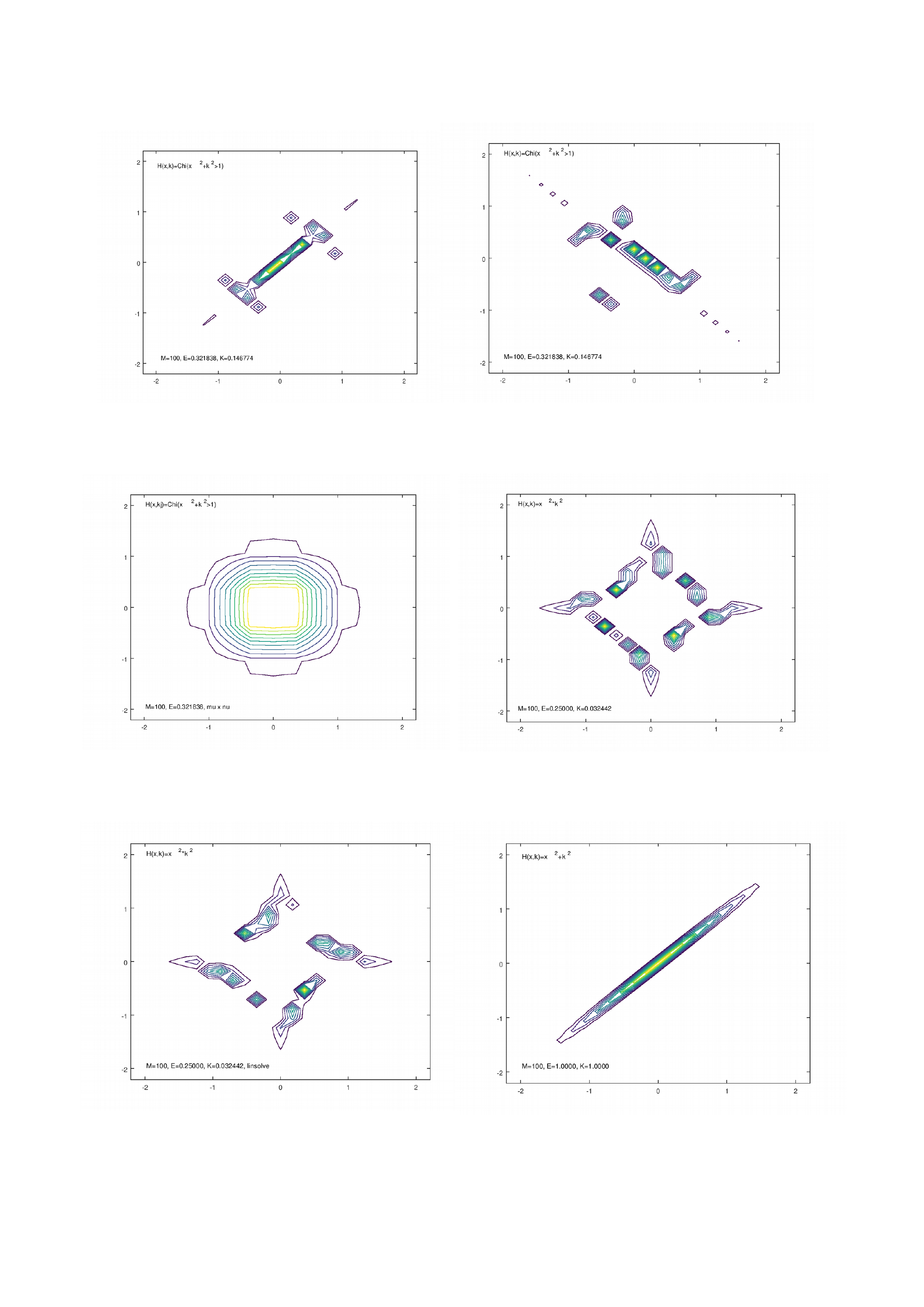}
\includepdf[pages=1-last]{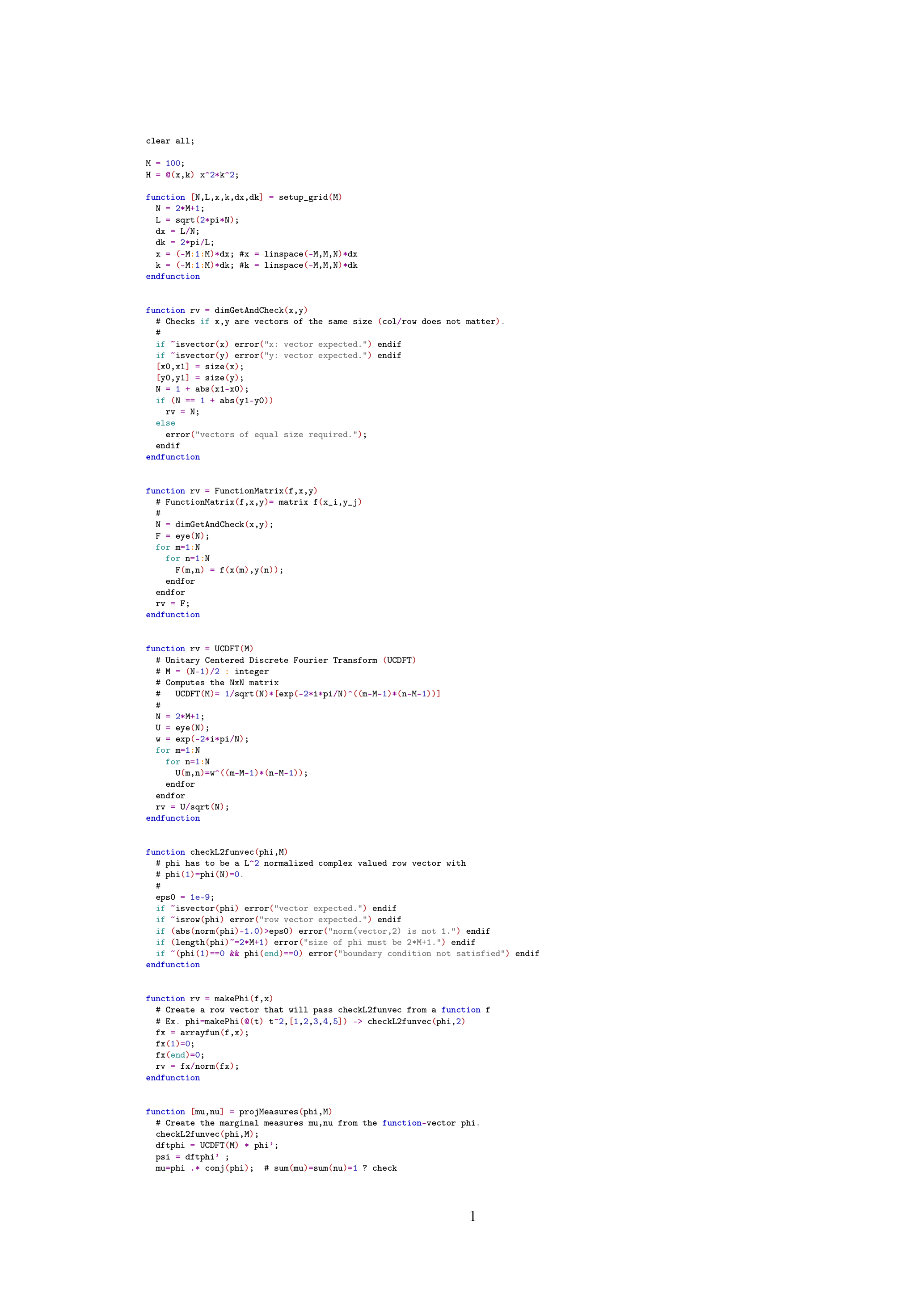}
\end{appendices}
\newpage
\bibliographystyle{plain}
\bibliography{omt}
\texttt{\tiny \$Id: transport.tex 3 2015-11-18 16:22:01Z pagani\$}\\
\texttt{\tiny Slightly extended, 18-MAR-2018.}\\
\texttt{\tiny Code will be updated @ github.com/nilqed/qmtrans \\}
\texttt{\tiny nilqed@gmail.com}
\end{document}